\documentclass[journal,draftcls,onecolumn]{IEEEtran}

\usepackage[utf8]{inputenc} 
\usepackage[T1]{fontenc}
\usepackage{url}
\usepackage{ifthen}
\usepackage[cmex10]{amsmath} 
\usepackage{enumitem}

\usepackage{tikz}
\usepackage{graphicx}
\usetikzlibrary{arrows.meta,calc,positioning,calc}

\usepackage{amsthm}
\newtheoremstyle{mystyle}
  {\topsep} 
  {\topsep} 
  {\normalfont} 
  {\parindent} 
  {\itshape} 
  {:} 
  { } 
  {} 

\theoremstyle{mystyle}

\newtheorem{prop}{Proposition}

\newtheorem{theorem}{Theorem}
\newtheorem{lemma}{Lemma}

\newtheorem{cor}{Corollary}
\newtheorem{remark}{Remark}
\newtheorem{example}{Example}

\usepackage{booktabs}
\usepackage{tabularx}
\usepackage{array}
\usepackage{multirow}

\usepackage{soul}

\usepackage{amsmath,amssymb,amsfonts}

\usepackage{dsfont}
\usepackage{algorithm}
\usepackage{algpseudocode}
\usepackage{graphicx}
\usepackage{textcomp}
\usepackage{xcolor}
\usepackage{braket}
\usepackage{hyperref}
\usepackage{amsthm}
\usepackage{url}

\usepackage{amsmath}

\usepackage{graphicx}
\usepackage{subcaption}

\newcommand{\bfU}{\mathbf{U}}

\newcommand{\bfr}{\mathbf{r}}

\newcommand{\bfD}{\mathbf{D}}

\newcommand{\bfH}{\mathbf{H}}
\newcommand{\bfV}{\mathbf{V}}

\newcommand{\bfT}{\mathbf{T}}
\newcommand{\bfG}{\mathbf{G}}

\newcommand{\bfg}{\mathbf{g}}

\newcommand{\bfB}{\mathbf{B}}

\newcommand{\bfM}{\mathbf{M}}
\newcommand{\bfR}{\mathbf{R}}

\newcommand{\calD}{\mathcal{D}}

\newcommand{\diag}{\text{diag}}

\newcommand{\norm}[1]{\left\lVert #1\right\rVert}

\usepackage{algorithm}

\title{Communication-Efficient Distributed Training via Ring-Based Coded Approximate All-Reduce}

\author{Sifat Munim and Aditya Ramamoorthy%
}

\begin{document}
\maketitle
\begin{abstract}
Ring All-Reduce is a widely used protocol within large-scale distributed training for aggregating gradients across workers in each iteration. For a system with $N$ workers, its normalized per-worker communication is $2(N-1)/N$. In this work we present a communication-efficient Ring All-Reduce (CERAR) protocol for ``approximate'' gradient aggregation. CERAR partitions each local gradient into $c$ components and crucially relies on linear encoding and decoding operations. It performs $L=c+N-2$ communication rounds over an $N$-worker ring, yielding normalized communication rate $1+(N-2)/c$ and normalized storage rate $1+N/c$. We present an explicit Vandermonde matrix based construction, whose approximation error can be made arbitrarily close to zero with communication and storage rates approaching one with increasing $c$. However, this limit is achieved through ill-conditioned encoding matrices. Accordingly, we give an alternate construction whose error is $O(\epsilon)$, while the relevant condition numbers are $O(\epsilon^{-r_\star})$, where $r_\star=\lceil c/N\rceil-1$. This naturally motivates a condition-number-constrained optimization formulation for trading off the competing objectives and obtaining numerically stable practical designs.

Our numerical experiments demonstrate that when compared with the production-standard NCCL on NVIDIA A100 PCIe GPU Clusters with low-bandwidth PCIe interconnect, CERAR reduces the raw All-Reduce phase time by up to 47\%, and for training a 1.4-billion-parameter Pythia model it reduces training time by about 17\% for the same number of iterations while achieving essentially identical validation loss. On the other hand, for GPU clusters with high-bandwidth NVLink interconnects, CERAR performs worse on the All-Reduce phase time and slightly worse in terms of training. Nevertheless, the gap reduces with experiments that involve systems with larger parameter counts. In this scenario we expect the gains to manifest with more optimized implementations of the CERAR protocol. 

\end{abstract}

\section{Introduction}
\label{sec:intro}

A typical scenario within model training involves a dataset $\mathcal{D} = \{(\mathbf{x}_i, y_i)\}_{i = 1}^{\tilde{N}}$ of $\tilde{N}$ data points, where $\mathbf{x}_i$'s and $y_i$'s are the features and labels, respectively. The aim is to minimize a loss function $ L \triangleq \frac{1}{\tilde{N}}\sum_{i = 1}^{\tilde{N}} l(\mathbf{x}_i, y_i; \mathbf{w})$ with respect to $\mathbf{w}$, where $\mathbf{w} \in \mathbb{R}^d$ is the unknown parameter that needs to be learned. The function $l$ measures the prediction error for each data point. In the distributed setting, portions of the dataset are distributed to different workers who compute the gradient on the data points assigned to them. At the $t$-th iteration, the system aims at computing the overall gradient $\nabla L = \frac{1}{\tilde{N}}\sum_{i=1}^{\tilde{N}} \nabla l(\mathbf{x}_i, y_i; \mathbf{w}_t)$, following which a parameter-update is performed and the iterations continue thereafter. We emphasize that $\calD$ can also represent a relevant subset of the entire dataset, e.g., when the cluster wants to run mini-batch stochastic gradient descent \cite{bottou_optimization} on an appropriate batch; this is in fact the most common mode of usage. Several distributed learning architectures have been investigated in the literature  \cite{langer2020distributed}, that differ based on how the dataset is divided into chunks and how the workers are connected to each other.

{\noindent {\bf Background and Motivation:}} For $N$ workers, the basic Ring All-Reduce training algorithm \cite{patarasuk2009bandwidth,sergeev2018horovod} operates by partitioning $\calD$ into $N$ equal-sized disjoint chunks $\mathcal{D}_j, j = 0, \dots, N-1$, such that worker $W_j$ is assigned $\mathcal{D}_j$. $W_j$ computes the sum of the gradients on its assigned data points, i.e., it computes $g_{\calD_j} = \frac{1}{\tilde{N}}\sum_{i \in \mathcal{D}_j}  \nabla l(\mathbf{x}_i, y_i; \mathbf{w}_t)$ (a vector of length-$d$). Once the workers have computed their assigned gradients, they send messages in parallel around a clockwise ring such that each worker can compute the overall gradient $\nabla L$. For a ring with $N$ workers, each worker transmits and receives a total of $2(N-1)d/N$ real values. For large $N$, the load normalized by $d$ approaches $2$, i.e., it does not grow with the number of workers. In contrast, in older architectures such as parameter server (PS) + workers \cite{li2014scaling}, the bandwidth load on the PS grows with $N$.

The basic Ring All-Reduce algorithm is nominally considered to be optimal from the point of view of recovering the sum $g_{\calD_0} + \dots + g_{\calD_{N-1}}$ exactly \cite{patarasuk2009bandwidth}. From an information-theoretic perspective, one can consider the problem of recovering the sum with arbitrarily small (but nonzero) error; the lower bound of \cite{patarasuk2009bandwidth} does not apply in this case. 
We note that approximate recovery often suffices in scenarios such as distributed parameter training, i.e., instead of true gradients, we often settle for approximate gradients which do not affect the quality of training much. We consider this approximate recovery setting in this paper and investigate the fundamental question of how small the communication load on a worker can be in this setting.

\begin{figure}[!t]
\centering
\begin{minipage}[t]{0.49\linewidth}
\centering
\resizebox{\linewidth}{!}{%
\begin{tikzpicture}[
 worker/.style={
   circle, draw=black!75, fill=blue!7,
   line width=0.8pt, minimum size=11mm, font=\large
 },
 arrow/.style={
   -{Stealth[length=2.5mm]},
   line width=0.9pt, blue!65!black
 },
 message/.style={
   font=\large, align=center,
   fill=white, inner sep=3pt
 },
 update/.style={
   font=\large, align=center, text=red!70!black
 }
]
\node[font=\large\bfseries] at (3,6.15) {Round 4};

\node[worker] (w0) at (0,4) {$W_0$};
\node[worker] (w1) at (6,4) {$W_1$};
\node[worker] (w2) at (6,0) {$W_2$};
\node[worker] (w3) at (0,0) {$W_3$};

\draw[arrow] (w0.east) --
  node[message,above] {
    $x_{0,4}=$\\
    $\delta_{0,4}+\delta_{3,3}$\\
    ${}+\delta_{2,2}+\delta_{1,1}$
  } (w1.west);

\draw[arrow] (w1.south) --
  node[message,right] {
    $x_{1,4}=$\\
    $\delta_{1,4}+\delta_{0,3}$\\
    ${}+\delta_{3,2}+\delta_{2,1}$
  } (w2.north);

\draw[arrow] (w2.west) --
  node[message,below] {
    $x_{2,4}=$\\
    $\delta_{2,4}+\delta_{1,3}$\\
    ${}+\delta_{0,2}+\delta_{3,1}$
  } (w3.east);

\draw[arrow] (w3.north) --
  node[message,left] {
    $x_{3,4}=$\\
    $\delta_{3,4}+\delta_{2,3}$\\
    ${}+\delta_{1,2}+\delta_{0,1}$
  } (w0.south);

\node[update,anchor=south] at (0,4.75)
  {$\sigma_{0,4}=x_{3,4}-\delta_{0,1}$};

\node[update,anchor=south] at (6,4.75)
  {$\sigma_{1,4}=x_{0,4}-\delta_{1,1}$};

\node[update,anchor=north] at (6,-0.75)
  {$\sigma_{2,4}=x_{1,4}-\delta_{2,1}$};

\node[update,anchor=north] at (0,-0.75)
  {$\sigma_{3,4}=x_{2,4}-\delta_{3,1}$};

\end{tikzpicture}%
}
\end{minipage}\hfill
\begin{minipage}[t]{0.49\linewidth}
\centering
\resizebox{\linewidth}{!}{%
\begin{tikzpicture}[
 worker/.style={
   circle, draw=black!75, fill=blue!7,
   line width=0.8pt, minimum size=11mm, font=\large
 },
 arrow/.style={
   -{Stealth[length=2.5mm]},
   line width=0.9pt, blue!65!black
 },
 message/.style={
   font=\large, align=center,
   fill=white, inner sep=3pt
 },
 update/.style={
   font=\large, align=center, text=red!70!black
 }
]
\node[font=\large\bfseries] at (3,6.15) {Round 5};

\node[worker] (w0) at (0,4) {$W_0$};
\node[worker] (w1) at (6,4) {$W_1$};
\node[worker] (w2) at (6,0) {$W_2$};
\node[worker] (w3) at (0,0) {$W_3$};

\draw[arrow] (w0.east) --
  node[message,above] {
    $x_{0,5} = $\\
    $\delta_{0,5}+\delta_{3,4}$\\
    ${}+\delta_{2,3}+\delta_{1,2}$
  } (w1.west);

\draw[arrow] (w1.south) --
  node[message,right] {
    $x_{1,5}=$\\
    $\delta_{1,5}+\delta_{0,4}$\\
    ${}+\delta_{3,3}+\delta_{2,2}$
  } (w2.north);

\draw[arrow] (w2.west) --
  node[message,below] {
    $x_{2,5}=$\\
    $\delta_{2,5}+\delta_{1,4}$\\
    ${}+\delta_{0,3}+\delta_{3,2}$
  } (w3.east);

\draw[arrow] (w3.north) --
  node[message,left] {
    $x_{3,5}=$\\
    $\delta_{3,5}+\delta_{2,4}$\\
    ${}+\delta_{1,3}+\delta_{0,2}$
  } (w0.south);

\node[update,anchor=south] at (0,4.75)
  {$\sigma_{0,5}=x_{3,5}-\delta_{0,2}$};

\node[update,anchor=south] at (6,4.75)
  {$\sigma_{1,5}=x_{0,5}-\delta_{1,2}$};

\node[update,anchor=north] at (6,-0.75)
  {$\sigma_{2,5}=x_{1,5}-\delta_{2,2}$};

\node[update,anchor=north] at (0,-0.75)
  {$\sigma_{3,5}=x_{2,5}-\delta_{3,2}$};

\end{tikzpicture}%
}
\end{minipage}

\caption{ {\small  CERAR protocol with $N=4$, $c=5$, and $L=7$:
rounds 4 and 5. Arrows show transmitted vectors; equations
beside workers show the rolling-state updates after reception
in the indicated round. The update equations appear in \eqref{eq:update_eqns}.}}
\label{fig:CERAR_eg}
\end{figure}
{\bf Problem Formulation:} For a given number of workers $N$, suppose that (for $i=0, \dots, N-1)$ $W_i$ divides its gradient $g_{\calD_i}$ into $c$ chunks. A round consists of clockwise communication over the ring where $W_i$ sends $W_{i+1}$ (indices reduced modulo-$N$) a real-valued vector of length $\tilde{c}$. Suppose that a given protocol communicates over $L$ rounds. The requirement is that worker $W_i$ should be able to reconstruct an approximate gradient $\hat{\bfg}_i$ at the end of these rounds such that
\begin{equation}
    \norm{\hat{\bfg}_i - \nabla L}_2 \leq \tilde{\delta}, \text{~for all $i = 0, \dots, N-1$,} \label{eq:approx_grad}
\end{equation}
where $\tilde{\delta} > 0$ is a small constant. Thus, the total number of real values communicated in the protocol per worker is $L \tilde{c}$. We define the normalized rate of the protocol as $\frac{L\tilde{c}}{d}$, e.g., the basic Ring All-Reduce operates with $L = 2(N-1), c = N, \tilde{c} = d/c$ and with $\tilde{\delta} = 0$ (exact reconstruction).

\noindent {\bf Related Work:}  
Gradient quantization \cite{alistarh2017qsgd,gandikota2022vqsgd,yu2018gradiveq}, and sparsification \cite{alistarh2018convergence, sahu2021rethinking,wangni2018gradient} aim at reducing the communication load of distributed training by ``compressing'' or ``sparsifying'' the gradients and then communicating the compressed representations. These approaches address communication reduction along a direction orthogonal to ours. Specifically, it may be possible to apply our techniques in conjunction with them.


Contributions in the area of gradient coding \cite{tandon_gradient,dimakis_cyclic_mds,YeA18,ramamoorthyMG24,munimR26} aim at making distributed training robust to worker failures/slowdowns; however, these are in the PS + workers model and moreover require redundancy in the assignment of chunks to workers.  Coded distributed computing \cite{LiMA16,kostasR20,huang2026optimalitycodeddistributedcomputing} is similar in the sense that it trades off computational redundancy for reduced communication for exact computation. Our work is related to approximate gradient coding since we rely critically on the judicious usage of linear combinations of gradient chunks and aim at recovering approximate gradients. However, crucially, our setting remains the same as the basic Ring All-Reduce; we do not need redundant chunk assignments.

{\noindent {\bf Main Contributions:}} In this work, we present a \textbf{C}ommunication-\textbf{E}fficient, Approximate \textbf{R}ing \textbf{A}ll-\textbf{R}educe protocol named CERAR. (i) We demonstrate that CERAR can achieve arbitrarily small error for a so-called Vandermonde construction, while operating with normalized communication and storage rates close to 1. The caveat is that this is achieved via increasingly ill-conditioned encoding and decoding matrices. (ii) We characterize the tradeoff between the error and numerical stability of the scheme by explicitly deriving a scheme where the error is $O(\epsilon)$ and the condition number of the matrices is $O(\epsilon^{-r_\star})$ where $r_\star = \lceil c/N\rceil -1$. (iii) We then propose a constrained optimization problem that trades off error for numerical stability in our scheme. Numerical experiments demonstrate the effectiveness of our approach.

\section{Communication Efficient Ring All Reduce (CERAR) Protocol}
\label{sec:prob_form}


The gradients are subdivided into $c$ chunks, i.e., $g_{\calD_i} = [g_{\mathcal D_i}[0]^T ~\cdots ~ g_{\mathcal D_i}[c-1]^T]^T$. In each of the $L = c + N - 2$ rounds (numbered $1$ through $L$), the transmitted vector is of length $p = d/c$ (we assume $c ~|~d$, which can be enforced by zero-padding). 
Let $\bfG_i:=
\begin{bmatrix}
g_{\mathcal D_i}[0]&\cdots&g_{\mathcal D_i}[c-1]
\end{bmatrix}
\in\mathbb R^{p \times c}$ and let $\mathcal{G}^{(i)} = [\bfG_i~|~\dots ~|~\bfG_{i+N-1}]$ (indices reduced modulo-$N$).

{\bf Initialization:} Worker $W_i$ has a $c \times c$ matrix $\bfU_i = \begin{bmatrix}
u_{i,1}&\cdots&u_{i,c}
\end{bmatrix}$. After computing $\bfG_i$, it creates encoded vectors 
\[
\Delta_i:=
\begin{bmatrix}
\delta_{i,1}&\cdots&\delta_{i,c}
\end{bmatrix}
= \bfG_i \bfU_i.
\]
When $\bfU_i$ is nonsingular, the map $\bfG_i\mapsto \Delta_i$ is an
invertible change of basis.  Consequently, the same $d$-length memory region
that initially contains $g_{\calD_i}$ can be transformed in place so
that it contains $\Delta_i$.
We denote the corresponding $p$-length storage slots by
$D_{i,1},\ldots,D_{i,c}$, such that $D_{i,j}$ stores $\delta_{i,j}$. In addition, we allocate $N-1$ additional $p$-length empty dedicated buffers in $W_i$ denoted $B_{i,1}, \dots ,B_{i,N-1}$, and one rolling $p$-length buffer denoted $C_i$. $C_i$
holds the current state used in the next transmission and
can also serve as the scratch buffer for the in-place transform
$\bfG_i\mapsto\Delta_i$ before communication begins. The buffer $C_i$ is initialized to the ``rolling state'' $\sigma_{i,0} = 0$.

The formal definition of the protocol appears in Algorithm \ref{alg:ring-protocol}, which describes the communication and
storage updates at each worker. 
%
For round $k$, where $1 \leq k \leq c$, worker $W_i$ adds the local encoded vector $\delta_{i,k}$ to its rolling state $\sigma_{i,k-1}$
and transmits the result $x_{i,k}$ to $W_{i+1}$, and for $c+1\leq k \leq L$ it simply transmits $x_{i,k} = \sigma_{i,k-1}$. The first $N-1$ received
vectors (in rounds 1 through $N-1$) are stored in dedicated buffers ($B_{i,j}$'s for $W_i$). For round $k=j+N-1$ (where $j \geq 1$),
$W_i$ subtracts its own returning local contribution, namely $\delta_{i,j}$ from the
received vector to obtain the next rolling state $\sigma_{i,k}$, while overwriting
the corresponding local buffer $D_{i,j}$ with the received vector for final
decoding. After $L=c+N-2$ rounds, the stored observations, together
with the final unmodified local encoded vector $\delta_{i,c}$, form $Y_i$, from
which the approximate gradient is reconstructed using $R_i$. The update equations at $W_i$ are (worker indices reduced modulo-$N$)
\begin{equation}
    x_{i,k} = \begin{cases}
    \delta_{i,k} + \sigma_{i,k-1}, & 1 \leq k \leq c,    \\
    \sigma_{i,k-1}, &  c+1 \leq k \leq L
    \end{cases} \text{~{\small \&}~} 
    \sigma_{i,k} = \begin{cases}
        x_{i-1,k}, &  1 \leq k \leq N-1,\\
        x_{i-1,k} - \delta_{i,k-N+1}, &  N \leq k \leq L.
    \end{cases} \label{eq:update_eqns}
\end{equation}

\begin{algorithm}[t]
\caption{Communication and storage updates at worker $W_i$}
\label{alg:ring-protocol}
\begin{algorithmic}[1]
\Require $N \geq 2$, $L=c+N-2$, and encoded vectors
         $\delta_{i,1},\ldots,\delta_{i,c}$
\Ensure Stored observations $Y_i$ for final decoding
\Statex \textit{Worker indices are reduced modulo $N$.
All workers execute each round concurrently.}

\For{$j=1,\ldots,c$}
    \State $D_{i,j} \gets \delta_{i,j}$
\EndFor
\State $C_i \gets 0$ \Comment{Initial state $\sigma_{i,0}$}

\For{$k=1,\ldots,L$}
    \Statex \textit{Step 1: Generate, accumulate, and transmit}
    \If{$k \leq c$}
        \State $C_i \gets C_i+D_{i,k}$
    \EndIf
    \State Transmit $C_i$ to $W_{i+1}$ as $x_{i,k}$
    \Statex \hspace{\algorithmicindent}
        \textit{Complete the outgoing send before overwriting $C_i$.}

    \Statex \textit{Step 2: Receive, store, and update}
    \State Receive $x_{i-1,k}$ from $W_{i-1}$ into $C_i$
    \If{$k \leq N-1$}
        \State $B_{i,k} \gets C_i$
    \Else
        \State $j \gets k-N+1$
        \State $C_i \gets C_i-D_{i,j}$
            \Comment{Cancel the returning local contribution}
        \State $D_{i,j} \gets D_{i,j}+C_i$
            \Comment{Store $x_{i-1,k}$ for decoding}
    \EndIf
    \Statex \hspace{\algorithmicindent}
        \textit{The buffer $C_i$ now contains $\sigma_{i,k}$.}
\EndFor

\State $Y_i \gets
    [\,B_{i,1}\mid\cdots\mid B_{i,N-1}
      \mid D_{i,1}\mid\cdots\mid D_{i,c}\,]$
    \Comment{Logical ordering of the existing buffers}
\end{algorithmic}
\end{algorithm}

\subsection{Illustrative Example with $N=4,c=5, L = 7$}
\label{sec:illustrative_eg}
We explain the basic protocol for the case of four workers and seven rounds, i.e., $N=4, c= 5, L=7$.
In round $1$, worker $W_i$ receives $x_{i-1,1} = \delta_{i-1,1}$ (since $\sigma_{i,0} = 0$, for all $W_i$) from its preceding worker, $W_{i-1}$ and stores it in $B_{i,1}$ . At the end of round $1$, $W_i$ updates its storage to $\sigma_{i,1} = x_{i-1,1} = \delta_{i-1,1}$. 
In round $4$ (see Fig. \ref{fig:CERAR_eg}), $W_i$ receives the following vector from $W_{i-1}$.
\begin{align*}
    x_{i-1,4} &= \delta_{i-1,4} + \sigma_{i-1,3}
    = \delta_{i-1,4} + \delta_{i-2,3} + \delta_{i-3,2} + \delta_{i-4,1}, 
\end{align*}
where the second equality follows by \eqref{eq:update_eqns}.
Note that $i-4 \equiv i \pmod 4$, i.e., $\delta_{i,1}$ needs to be canceled at $W_i$. Therefore, 
\begin{align*}
    \sigma_{i,4} &= x_{i-1,4} - \delta_{i,1} =\delta_{i-1,4} + \delta_{i-2,3} + \delta_{i-3,2}, 
\end{align*}
and $D_{i,1} = x_{i-1,4}$. This process continues until round $5$ (note that $c=5$ here), following which the workers simply forward their stored values. Thus, in round $6$, $W_i$ simply receives $\sigma_{i-1,5}$. For example, $W_0$ receives $\sigma_{3,5} = \delta_{2,5} + \delta_{1,4} + \delta_{0,3}$ and it calculates 
\[
\sigma_{0,6} = \delta_{2,5} + \delta_{1,4}
\]
by canceling $\delta_{0,3}$.
It can be observed that the received vectors at $W_i$ at each round can be compactly represented as matrix $\bfM_i$ of dimension $Nc \times (L+1)$. For example, $\bfM_0$ and $\bfM_1$ are given as

\begin{align}
\mathbf M_0 &=
\begin{bmatrix}
\mathbf 0 &\mathbf 0 & \mathbf 0 &
u_{0,1} & u_{0,2} & u_{0,3} &
u_{0,4} & u_{0,5}\\
\mathbf 0 & \mathbf 0 &
u_{1,1} & u_{1,2} & u_{1,3} &
u_{1,4} & u_{1,5} & \mathbf 0\\
\mathbf 0 &
u_{2,1} & u_{2,2} & u_{2,3} &
u_{2,4} & u_{2,5} &
\mathbf 0 & \mathbf 0\\
u_{3,1} & u_{3,2} & u_{3,3} &
u_{3,4} & u_{3,5} &
\mathbf 0 & \mathbf 0 & \mathbf 0
\end{bmatrix}, \text{and} \label{eq: M_0_matrix}\\
\mathbf M_1 &=
\begin{bmatrix}
\mathbf 0 & \mathbf 0 & \mathbf 0 &
u_{1,1} & u_{1,2} & u_{1,3} &
u_{1,4} & u_{1,5}\\
\mathbf 0 & \mathbf 0 &
u_{2,1} & u_{2,2} & u_{2,3} &
u_{2,4} & u_{2,5} & \mathbf 0\\
\mathbf 0 &
u_{3,1} & u_{3,2} & u_{3,3} &
u_{3,4} & u_{3,5} &
\mathbf 0 & \mathbf 0\\
u_{0,1} & u_{0,2} & u_{0,3} &
u_{0,4} & u_{0,5} &
\mathbf 0 & \mathbf 0 & \mathbf 0
\end{bmatrix}. \label{eq: M_1_matrix}
\end{align}
$\bfM_2$ and $\bfM_3$ can be obtained in a similar manner. 
For instance, the interpretation of the fourth column of $\bfM_0$ is that it receives the $p$-length vector
\[
y_{0,4} = \sum_{\ell=0}^{c-1} u_{0,1}[\ell] g_{\calD_0}[\ell] + \sum_{\ell=0}^{c-1} u_{1,2}[\ell] g_{\calD_1}[\ell] + \sum_{\ell=0}^{c-1} u_{2,3}[\ell] g_{\calD_2}[\ell] + \sum_{\ell=0}^{c-1} u_{3,4}[\ell] g_{\calD_3}[\ell]
\]
from $W_3$ in round 4. $y_{0,4}$ is stored in $D_{0,1}$ after overwriting $\delta_{0,1}$. The last column corresponds to $y_{0,8} = \delta_{0,5}$ that is contained in $D_{0,5}$. It is not overwritten as there are only $L=7$ rounds.

The worker $W_i$ wants to obtain $\hat{\bfg_i}$ (see \eqref{eq:approx_grad}). Thus, $W_0$, e.g., wants to find a decoding vector $\bfr_1$ of length-$(L+1)$ such that it can approximately recover $\sum_{i=0}^{N-1} g_{\calD_i}[0]$, i.e., 
\[
\sum_{i=1}^{L+1} \bfr_{1i} y_{0,i} \approx \sum_{i=0}^{N-1} g_{\calD_i}[0].
\]
This in turn implies that we need $\bfM_0 \bfr_1 \approx [e_0^T~ e_0^T~ e_0^T~e_0^T]^T
= \mathds{1}_4 \otimes e_0$ ($\otimes$ - Kronecker product),
where $e_i, i = 0, \dots, c-1$ denotes the canonical basis vector in $\mathbb{R}^c$. In a similar manner it can be observed that we need the existence of other decoding vectors such that $\mathds{1}_N \otimes e_i$ for $i = 1, \dots, c-1$ can be decoded from each of them. This can be compactly represented as the existence of a $(L+1) \times c$ matrix $\bfR_i$ such that $\bfM_i \bfR_i \approx \mathds{1}_{N} \otimes I_c$ ($\mathds{1}_N$ - all-ones vector of length-$N$, $I_c$ - $c\times c$ identity matrix).
Suppose that the protocol identifies suitable $\bfM_i$ and $\bfR_i$ such that $\norm{\bfM_i \bfR_i - \mathds{1}_{N} \otimes I_c}_F \leq \delta_{\rm dec}$ for $i=0, \dots, N-1$, where $\norm{\cdot}_F$ denotes the Frobenius norm. Then, we have 
\begin{align}
    \norm{\hat{\bfg}_i - \nabla L}_2  = \norm{\mathcal{G}^{(i)} (\bfM_i \bfR_i - \mathds{1}_{N} \otimes I_c)}_F 
&\leq \norm{\mathcal{G}^{(i)}}_2 \delta_{\rm dec} = \norm{\mathcal{G}^{(0)}}_2 \delta_{\rm dec}, \label{eq:bounded_spectral_assumption}
\end{align}
where $\delta_{\rm dec}$ is the decoding error, and $\norm{\cdot}_2$ denotes the spectral norm. The inequality is standard, and the second equality holds since the $\mathcal{G}^{(i)}$'s are related by permutations. 
Thus, assuming that $\norm{\mathcal{G}^{(0)}}_2 \leq C_G$ uniformly over the gradients under consideration, the overall error in the gradient computation ({\it cf.} \eqref{eq:approx_grad}) can be made small enough by choosing $\delta_{\rm dec}$ small enough. Appendix \S{\ref{sec:min_rate_one}} shows that the normalized rate of any linear protocol satisfying \eqref{eq:approx_grad} for sufficiently small $\tilde{\delta}$ (under the bounded spectral norm assumption) is at least 1.

\subsection{Discussion of the CERAR protocol} 




The total storage per worker includes $D_{i,j}, j = 1,\dots,c$, the buffers $B_{i,j}, j = 1, \dots, N-1$ and the buffer $C_i$ that contains the rolling state. 
Thus, the normalized storage rate and communication rate of CERAR are given by
\begin{equation*}
\mathcal{S} 
=1+\frac{N}{c}, \qquad \text{and}\qquad \mathcal{R} = \frac{L}{c}
    =1+\frac{N-2}{c}. 
\end{equation*}
We note that the storage rate of the basic Ring All-Reduce is exactly 1.
Let $\bfU_k, k = 0, \dots, N-1$ denote the collection of the $u_{k,j}, 1 \leq j \leq c$ variables, i.e., $\bfU_k = [u_{k,1}~ u_{k,2}~ \dots ~ u_{k,c}]$ and let $\bfU = (\bfU_0, \dots, \bfU_{N-1})$; these are the encoding matrices. Each $\bfM_i, 0 \leq i \leq N-1$ denotes a $Nc \times (L+1)$ matrix that is obtained as a result of the CERAR protocol (examples in \eqref{eq: M_0_matrix}, \eqref{eq: M_1_matrix}); henceforth we denote it as $\bfM_i(\bfU)$ to show the dependence explicitly. Furthermore, note that the $r$-th block-row of $\bfM_i(\bfU)$ of dimension $c \times (L+1)$ can be expressed as
\begin{align}
    \bfM_i(\bfU)[r,:] &= [\mathbf{0}_{c \times (N-1-r)}~\bfU_{i+r}~ \mathbf{0}_{c \times r}] \text{~for~} r = 0,\dots, N-1.
\end{align}
Here the subscript $i+r$ is reduced modulo-$N$.
Let $\bfR_i, 0 \leq i \leq N-1$ denote $(L+1) \times c$ decoding matrices; let $\bfR = (\bfR_0, \dots, \bfR_{N-1})$, and the target $\bfT = \mathds{1}_{N} \otimes I_c$. For $\bfR_i$, we denote the $c$ consecutive rows starting at row $s$ by $\mathcal{W}_s(\bfR_i) = \bfR_i[s:s+c-1,:]$ for $s=0, \dots, N-1$, where the rows are numbered $0,\dots, L$. The least-squares fit in our scenario is measured by 
\begin{align}
\label{eq:F_unreg}
\Phi(\bfU,\bfR)
=
\sum_{i=0}^{N-1}\norm{\bfM_i(\bfU)\bfR_i-\bfT}_F^2.
\end{align}
We note here that both $\mathcal{R}$ and $\mathcal{S}$ are decreasing in $c$, and the basic Ring All-Reduce already provides a zero-error solution for the case when $c=N$. For the remainder of the paper we only consider the regime $c > N$. Lemma \ref{lemma:nonexistence_exact} is proved in Appendix, \S{ \ref{sec:nonexistence_proof}}. 

\begin{lemma}
\label{lemma:nonexistence_exact}
Suppose that $c>N$.  There is no finite collection of encoding and decoding matrices
$(\bfU, \bfR)$ for which $\Phi(\bfU,\bfR)=0$.
\end{lemma}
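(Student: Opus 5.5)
The plan is to argue by contradiction: assume $\Phi(\bfU,\bfR)=0$, so that $\bfM_i(\bfU)\bfR_i=\bfT=\mathds{1}_N\otimes I_c$ holds exactly for every $i$. Then read this identity block-row by block-row, using the block-row structure of $\bfM_i(\bfU)$ stated above.

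\textbf{Step 1 (invertibility and windows).} The $r$-th block row of $\bfM_i(\bfU)$ is $[\mathbf{0}_{c\times(N-1-r)}~\bfU_{i+r}~\mathbf{0}_{c\times r}]$. The $r$-th block row of the identity therefore reads
\[
\bfU_{i+r}\,\mathcal{W}_{N-1-r}(\bfR_i)=I_c, \qquad r=0,\dots,N-1 .
\]
Hence every $\bfU_k$ is square and nonsingular. Writing $\mathbf{V}_k:=\bfU_k^{-1}$, we get
\[
\mathcal{W}_{N-1-r}(\bfR_i)=\mathbf{V}_{i+r}
\]
for all $i$ and $r$, with indices taken modulo $N$.

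\textbf{Step 2 (shift relations from overlapping windows).} The windows $\mathcal{W}_s(\bfR_i)$ and $\mathcal{W}_{s-1}(\bfR_i)$ share $c-1$ rows. Specifically, rows $0,\dots,c-2$ of $\mathcal{W}_s$ coincide with rows $1,\dots,c-1$ of $\mathcal{W}_{s-1}$. Take $s=N-1-r$ and $s-1=N-2-r$ for $r=0,\dots,N-2$. This gives, writing $\mathbf{V}_k[j]$ for the $j$-th row,
\[
\mathbf{V}_{i+r+1}[j+1]=\mathbf{V}_{i+r}[j],\qquad j=0,\dots,c-2 .
\]
For a fixed worker $i$, this holds for every $k=i+r$ with $k\not\equiv i-1 \pmod N$. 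Taking the union over two different workers (here $N\ge 2$ is used) gives the relation for every $k$ modulo $N$:
\[
\mathbf{V}_{k+1}[j+1]=\mathbf{V}_k[j]\quad\text{for all } k \text{ and all } 0\le j\le c-2 .
\]

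\textbf{Step 3 (going once around the ring).} Since $c>N$, we have $N\le c-1$, so we may iterate the relation $N$ times starting from row $N$ of $\mathbf{V}_k$:
\[
\mathbf{V}_k[N]=\mathbf{V}_{k-1}[N-1]=\cdots=\mathbf{V}_{k-N}[0]=\mathbf{V}_k[0].
\]
Thus $\mathbf{V}_k$ has two identical rows, rows $0$ and $N$, which are distinct because $N\ge 1$. So $\mathbf{V}_k$ is singular, contradicting $\mathbf{V}_k=\bfU_k^{-1}$. This is exactly where $c>N$ enters. When $c=N$, row $N$ does not exist, consistent with the exact Ring All-Reduce.

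The main point needing care is Step 2. The row-index bookkeeping of overlapping windows must be done correctly. Also, one worker alone yields only $N-1$ of the $N$ cyclic relations, so a second worker is needed to close the cycle.
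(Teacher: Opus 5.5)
Your proposal is correct and follows essentially the same route as the paper: exact block identities force each window $\mathcal{W}_s(\bfR_i)$ to equal some $\bfU_k^{-1}$, overlapping windows give the one-row shift relation between consecutive inverses, and iterating that relation $N$ times around the ring produces a repeated row (the paper states this as the rank bound $\operatorname{rank}\le N<c$). You also say explicitly that a single worker supplies only $N-1$ of the $N$ cyclic shift relations, so a second worker (hence $N\ge 2$) is needed; the paper leaves this detail implicit.
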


\subsubsection{Multiplicative Perturbation Construction (MPC)}
\label{sec:mult_diag_cons}
We now present a specific construction of the $\bfU$ and $\bfR$ matrices that are obtained by (i) a $N \times c$ base matrix $\bfB$ with rows $b_0, \dots, b_{N-1}$, each of length-$c$, and (ii) a $c \times c$ nonsingular matrix $\bfD$ that perturbs $\bfB$ by right-multiplication. Define for $r = 0, \dots, N-1$ and an integer $m$, the row-vector
%
\begin{align}
    a_{r+mN} &= b_r\bfD^m. \label{eq:a_seq}
\end{align}
Thus, the above vector is defined for all integer indices for the LHS of \eqref{eq:a_seq}.
Note that we have $a_{\ell+N} = a_\ell \bfD$. For every integer $\ell$, define the $c\times c$ consecutive-row window
\begin{equation}
A_\ell
=
\begin{bmatrix}
a_\ell\\
a_{\ell+1}\\
\vdots\\
a_{\ell+c-1}
\end{bmatrix}.
\label{eq:A-ell}
\end{equation}
It follows that $A_{\ell+N} = A_\ell \bfD$. We assume that $A_0, \dots, A_{N-1}$ are nonsingular.
Define
\begin{equation}
\bfV_{N-1-k}:=A_k,
\qquad
\bfU_k:=\bfV_k^{-1},
\qquad
k=0,\ldots,N-1.
\label{eq:encoders}
\end{equation}

For worker $W_i$, define the explicit decoder
$\bfR_i\in\mathbb{R}^{(L+1)\times c}$ by taking the $L+1$ consecutive rows: 
\begin{equation}
    \bfR_i[\ell,:]=a_{\ell-i},
    \qquad \ell=0,\ldots,L.
    \label{eq:decoderconstruction}
\end{equation}
Equations \eqref{eq:encoders} and \eqref{eq:decoderconstruction} specify the construction of the scheme; the $s$-th $c$-row window of $\bfR_i$ is
\begin{equation}
\mathcal W_s(\bfR_i)
=
A_{s-i},
\qquad
s=0,\ldots,N-1.
\label{eq:decoderwindow}
\end{equation}


\begin{prop}
\label{prop:block}
Let worker indices be reduced modulo-$N$. For the construction above,
\begin{equation*}
\bfU_{i-1-s}\mathcal W_s(\bfR_i)
=
\begin{cases}
I_c, & s\ge i,\\[1mm]
\bfD^{-1}, & s<i.
\end{cases}
\end{equation*}
\end{prop}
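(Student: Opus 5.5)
We compute $\bfU_{i-1-s}$ directly from the definitions in \eqref{eq:encoders}, then compare it with $\mathcal W_s(\bfR_i)=A_{s-i}$ from \eqref{eq:decoderwindow}. The only subtlety is that $i-1-s$ is reduced modulo $N$, so it has to be placed in $\{0,\dots,N-1\}$ before \eqref{eq:encoders} applies. Here $i,s\in\{0,\dots,N-1\}$.

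By \eqref{eq:encoders}, $\bfU_k=\bfV_k^{-1}=A_{N-1-k}^{-1}$ for $k\in\{0,\dots,N-1\}$, since $\bfV_k=A_{N-1-k}$.

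\textbf{Case $s<i$.} Then $k:=i-1-s\in\{0,\dots,i-1\}\subseteq\{0,\dots,N-1\}$, so no reduction is needed and
\[
\bfU_{i-1-s}=A_{N-1-(i-1-s)}^{-1}=A_{N+s-i}^{-1}.
\]
The shift relation $A_{\ell+N}=A_\ell\bfD$ holds for every integer $\ell$. It follows from $a_{\ell+N}=a_\ell\bfD$, which \eqref{eq:a_seq} gives for all integers. With $\ell=s-i$ it yields $A_{N+s-i}=A_{s-i}\bfD$. Hence
\[
\bfU_{i-1-s}A_{s-i}=\bfD^{-1}A_{s-i}^{-1}A_{s-i}=\bfD^{-1}.
\]
This step needs $A_{s-i}$ to be invertible. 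Since $s-i\in\{-(N-1),\dots,-1\}$, this index lies outside the range where nonsingularity was assumed. However, $A_{s-i}=A_{N+s-i}\bfD^{-1}$ with $N+s-i\in\{1,\dots,N-1\}$, and $\bfD$ is nonsingular, so $A_{s-i}$ is nonsingular as well.

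\textbf{Case $s\ge i$.} Now $i-1-s\in\{-N,\dots,-1\}$, so its reduction modulo $N$ is $k=N+i-1-s\in\{0,\dots,N-1\}$. Then
\[
\bfU_k=A_{N-1-k}^{-1}=A_{s-i}^{-1}.
\]
Here $s-i\in\{0,\dots,N-1\}$ is a window index for which nonsingularity is assumed, so $\bfU_{i-1-s}\mathcal W_s(\bfR_i)=A_{s-i}^{-1}A_{s-i}=I_c$.

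\textbf{Main obstacle.} The computation itself is short. The difficulty is purely bookkeeping: getting the modular reduction of $i-1-s$ right in each case, and justifying the shift identity $A_{\ell+N}=A_\ell\bfD$ for negative $\ell$. The latter requires noting that \eqref{eq:a_seq} defines $a_\ell$ for all integers $\ell$, with $m$ allowed to be negative, which uses the invertibility of $\bfD$.

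For completeness, I would also verify \eqref{eq:decoderwindow} from \eqref{eq:decoderconstruction}. Rows $s,\dots,s+c-1$ of $\bfR_i$ are $a_{s-i},\dots,a_{s-i+c-1}$, which are exactly the rows of $A_{s-i}$; these row indices lie within $0,\dots,L$ because $s+c-1\le N-1+c-1=L$.
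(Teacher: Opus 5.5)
Your proposal is correct and follows essentially the same route as the paper: identify $\bfU_{i-1-s}$ with $A_{s-i}^{-1}$ (case $s\ge i$) or $A_{s-i+N}^{-1}$ (case $s<i$), then use the shift relation $A_{\ell+N}=A_\ell\bfD$. One small difference is that in the case $s<i$ the paper writes $A_{s-i}=A_{s-i+N}\bfD^{-1}$ and cancels $\bfV_{i-1-s}^{-1}A_{s-i+N}=I_c$ directly. That route never needs the nonsingularity of $A_{s-i}$ at a negative index, so your extra (correctly justified) invertibility remark could be dropped by writing it that way.
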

\begin{proof}
If \(s\ge i\), then \(s-i\in\{0,\ldots,N-1\}\).  By  \eqref{eq:encoders},
\begin{align*}
A_{s-i} &=\bfV_{i-1-s}, \text{(reducing the index of RHS modulo-$N$)}\\
\text{Therefore,~} \bfU_{i-1-s}\mathcal W_s(\bfR_i)
&=
\bfV_{i-1-s}^{-1}A_{s-i} = I_c.
\end{align*}
If \(s<i\), we can assert $A_{s-i}
= A_{s-i+N}\bfD^{-1}$ where $s-i+N\geq 0$ since $s-i \geq -(N-1)$.
Again,
\begin{align*}
A_{s-i+N}&=\bfV_{i-1-s}, \text{~and hence}\\
\bfU_{i-1-s}\mathcal W_s(\bfR_i)
&= \bfV_{i-1-s}^{-1}A_{s-i+N}\bfD^{-1} = \bfD^{-1}.  \qquad \qquad\qquad\qquad \qedhere
\end{align*}
\end{proof}

\begin{prop}
\label{prop:explicitresidual}
Let $\tilde{\bfU}$ and $\tilde{\bfR}$ denote the collections of matrices constructed as in \eqref{eq:encoders} and \eqref{eq:decoderconstruction}. Then,
\begin{equation}
\inf_{\bfU, \bfR} \Phi(\bfU,\bfR) \leq \Phi(\tilde{\bfU},\tilde{\bfR})
=
\frac{N(N-1)}{2}
\|\bfD^{-1}-I_c\|_F^2.
\label{eq:explicitresidual}
\end{equation}
\end{prop}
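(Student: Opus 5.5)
The inequality $\inf_{\bfU,\bfR}\Phi\le\Phi(\tilde\bfU,\tilde\bfR)$ is trivial, since $(\tilde\bfU,\tilde\bfR)$ is one admissible choice. The work is therefore to evaluate $\Phi(\tilde\bfU,\tilde\bfR)$ exactly, and Proposition~\ref{prop:block} will do most of it.

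\textbf{Step 1: block structure of $\bfM_i(\bfU)\bfR_i$.} By the expression for the $r$-th block-row of $\bfM_i(\bfU)$, we have $\bfM_i(\bfU)[r,:]=[\mathbf 0_{c\times(N-1-r)}~\bfU_{i+r}~\mathbf 0_{c\times r}]$. The nonzero block $\bfU_{i+r}$ occupies columns $N-1-r,\dots,N-2-r+c$. Multiplying by $\bfR_i$ therefore picks out the $c$ consecutive rows of $\bfR_i$ starting at row $s=N-1-r$, so
\[
(\bfM_i\bfR_i)[r,:]=\bfU_{i+r}\,\mathcal W_{N-1-r}(\bfR_i).
\]
Since $r=N-1-s$, reducing modulo $N$ gives $i+r\equiv i-1-s$. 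Hence the $r$-th block-row equals $\bfU_{i-1-s}\mathcal W_s(\bfR_i)$ with $s=N-1-r\in\{0,\dots,N-1\}$. This is exactly the quantity computed in Proposition~\ref{prop:block}.

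\textbf{Step 2: apply Proposition~\ref{prop:block}.} The target $\bfT=\mathds 1_N\otimes I_c$ has every block-row equal to $I_c$, so the squared Frobenius norm splits across block-rows:
\[
\norm{\bfM_i\bfR_i-\bfT}_F^2=\sum_{s=0}^{N-1}\norm{\bfU_{i-1-s}\mathcal W_s(\bfR_i)-I_c}_F^2 .
\]
Block-rows with $s\ge i$ contribute $0$. Block-rows with $s<i$ contribute $\norm{\bfD^{-1}-I_c}_F^2$ each, and there are exactly $i$ such values of $s$. Thus worker $i$ contributes $i\norm{\bfD^{-1}-I_c}_F^2$.

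\textbf{Step 3: sum over workers.} Summing over $i=0,\dots,N-1$ gives
\[
\Phi(\tilde\bfU,\tilde\bfR)=\Big(\sum_{i=0}^{N-1}i\Big)\norm{\bfD^{-1}-I_c}_F^2=\frac{N(N-1)}{2}\norm{\bfD^{-1}-I_c}_F^2,
\]
which is the claimed identity.

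\textbf{Main obstacle.} The only delicate point is the index bookkeeping in Step 1. One must check that the column offset $N-1-r$ of the $\bfU$ block matches the window start $s$ in the definition of $\mathcal W_s$, with rows of $\bfR_i$ numbered $0,\dots,L$ so that $s+c-1\le L$ holds for all $s\le N-1$. One must also check that the worker subscript $i+r$ agrees with $i-1-s$ modulo $N$. I would verify both against the example $\bfM_0$ in \eqref{eq: M_0_matrix}. Its row $r=0$ has $\bfU_0$ in columns $3$ to $7$, so $s=3=N-1$, and the required subscript is $i-1-s=-4\equiv0\pmod 4$, which matches.
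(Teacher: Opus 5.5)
Your proof is correct and follows the paper's argument. Feasibility of $(\tilde{\bfU},\tilde{\bfR})$ gives the inequality, and Proposition~\ref{prop:block} shows that worker $W_i$ has exactly $i$ residual blocks equal to $\bfD^{-1}-I_c$, which sum to $\frac{N(N-1)}{2}\|\bfD^{-1}-I_c\|_F^2$. Your Step~1 just makes explicit the block-row identification $(\bfM_i\bfR_i)[r,:]=\bfU_{i-1-s}\mathcal W_s(\bfR_i)$ with $s=N-1-r$, which the paper uses implicitly here and spells out in its proof of Lemma~\ref{lemma:nonexistence_exact}.
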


\begin{proof}
Worker \(W_i\) has exactly \(i\) indices \(s\in\{0,\ldots,N-1\}\) for which
\(s<i\).  By Proposition~\ref{prop:block}, these are precisely the
nonzero residual blocks, and each is equal to
$\bfD^{-1}-I_c$.  Therefore
\[
\Phi(\tilde{\bfU},\tilde{\bfR})
=
\sum_{i=0}^{N-1}
i\,\|\bfD^{-1}-I_c\|_F^2
=
\frac{N(N-1)}2
\|\bfD^{-1}-I_c\|_F^2.
\]
The matrices $\tilde{\bfU}$ and $\tilde{\bfR}$ specified in \eqref{eq:encoders} and \eqref{eq:decoderconstruction} are feasible choices in the
least-squares minimization in \eqref{eq:F_unreg}, so the optimized
least-squares residual cannot exceed their residual.
\end{proof}



\subsubsection{A Vandermonde construction that achieves arbitrarily small $\Phi(\bfU, \bfR)$}
\label{sec:vandermonde}
Let $\xi_1,\ldots,\xi_c$ be distinct reals, and for $\tau>0$, set $z_j(\tau):=e^{\tau\xi_j}$ for $j = 1, \dots, c$. Define
%
\begin{align*}
    b_\ell(\tau)
    &:=
    \begin{bmatrix}
        z_1(\tau)^\ell & \cdots & z_c(\tau)^\ell
    \end{bmatrix}, \text{~for $\ell = 0, \dots, N-1$, and~} \\
    \bfD_\tau
    &:=\diag\bigl(z_1(\tau)^{N},\ldots,z_c(\tau)^{N}\bigr)
    =\diag\bigl(e^{N\tau\xi_1},\ldots,e^{N\tau\xi_c}\bigr). 
\end{align*}
This defines the sequence specified in \eqref{eq:a_seq}, and therefore $A_\ell$. Henceforth, we refer to it as $A_\ell(\tau)$ to emphasize the dependence on $\tau$; it is a shifted Vandermonde matrix with nodes $z_1(\tau), \dots z_c(\tau)$ and hence nonsingular \cite{horn_matrix_analysis}. In particular, this choice satisfies the requirement that $A_0(\tau), \dots, A_{N-1}(\tau)$  are nonsingular. We call this the Vandermonde construction. The following theorem follows from an application of Proposition \ref{prop:explicitresidual} (details in Appendix, \S{\ref{sec:vand_cons_proof}}).

\begin{theorem}
    Let $\tilde{\bfU}_\tau$ and $\tilde{\bfR}_\tau$ be the matrices obtained by applying the Vandermonde construction. Then, $\Phi(\tilde{\bfU}_\tau,\tilde{\bfR}_\tau) =  \frac{N^3(N-1)}{2}\tau^2\sum_{j=1}^{c}\xi_j^2 + O(\tau^3) \to 0$ as $\tau \downarrow 0$ so that  $\inf_{\bfU, \bfR} \Phi(\bfU,\bfR)=0$. If $c>N$, the infimum is not attained.
\end{theorem}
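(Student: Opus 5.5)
The plan is to apply Proposition~\ref{prop:explicitresidual} to the Vandermonde choice and then expand $\|\bfD_\tau^{-1}-I_c\|_F^2$ in $\tau$. First we check that the construction is admissible. The rows satisfy $a_{r+mN}=b_r\bfD_\tau^m$, so $a_\ell=[z_1^\ell\ \cdots\ z_c^\ell]$ holds for every integer $\ell$. Hence $A_\ell(\tau)$ equals $V\,\diag(z_1^\ell,\dots,z_c^\ell)$, where $V$ is the standard $c\times c$ Vandermonde matrix on the distinct positive nodes $z_j(\tau)$; these nodes are distinct because the $\xi_j$ are distinct and $\tau>0$. It follows that each $A_\ell(\tau)$ is nonsingular, so \eqref{eq:encoders} and \eqref{eq:decoderconstruction} are well defined. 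Proposition~\ref{prop:explicitresidual} then gives
\begin{equation*}
\Phi(\tilde{\bfU}_\tau,\tilde{\bfR}_\tau)=\frac{N(N-1)}{2}\sum_{j=1}^{c}\bigl(e^{-N\tau\xi_j}-1\bigr)^2 .
\end{equation*}

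Next we Taylor expand each summand. We have $e^{-N\tau\xi_j}-1=-N\tau\xi_j+O(\tau^2)$, so $(e^{-N\tau\xi_j}-1)^2=N^2\tau^2\xi_j^2+O(\tau^3)$. There are finitely many $j$ and the $\xi_j$ are fixed, so the remainders combine uniformly. Substituting gives $\frac{N^3(N-1)}{2}\tau^2\sum_j\xi_j^2+O(\tau^3)$, which tends to $0$ as $\tau\downarrow 0$. Since $\Phi\ge 0$ and $\inf\Phi\le\Phi(\tilde{\bfU}_\tau,\tilde{\bfR}_\tau)$ for every $\tau>0$, we conclude $\inf_{\bfU,\bfR}\Phi=0$.

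For non-attainment when $c>N$, suppose some finite $(\bfU,\bfR)$ attained the infimum. Then $\Phi(\bfU,\bfR)=0$, which contradicts Lemma~\ref{lemma:nonexistence_exact}.

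None of these steps is a real obstacle. The only points needing care are confirming that $A_\ell(\tau)$ is nonsingular for negative $\ell$ as well (handled by the diagonal-scaling factorization above) and tracking the constant $N^2$ that comes from the exponent $N\tau\xi_j$ in $\bfD_\tau$. The theorem's wording "not attained" also implicitly uses that the encoder and decoder entries $\bfU_k=A_{N-1-k}(\tau)^{-1}$ blow up as $\tau\to 0$, because the nodes coalesce. That blow-up is not needed for the proof, since the lemma already excludes attainment.
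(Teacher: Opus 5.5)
Your proof is correct and follows the paper's argument: you apply Proposition~\ref{prop:explicitresidual} with $\bfD_\tau^{-1}-I_c=\diag(e^{-N\tau\xi_1}-1,\ldots,e^{-N\tau\xi_c}-1)$, Taylor expand to obtain $\frac{N^3(N-1)}{2}\tau^2\sum_j\xi_j^2+O(\tau^3)$, and invoke Lemma~\ref{lemma:nonexistence_exact} for non-attainment. Your factorization $A_\ell(\tau)=V\,\diag(z_1^\ell,\ldots,z_c^\ell)$ is a slightly more explicit form of the paper's remark that $A_\ell(\tau)$ is a shifted Vandermonde matrix and hence nonsingular.
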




\begin{remark}

Choosing a sequence $\tau_c$ (dependent on $c$) such that $\tau_c^2\sum_{j=1}^{c}\xi_j^2 \to 0$, this shows a rather striking fact that one can simultaneously achieve a normalized rate and storage of $1$ as $c \to \infty$\footnote{We need $d \to \infty$ as well. In practice $d$ is many orders of magnitude larger than $c$, so this is not a limitation.} while driving the approximation error in the reconstruction arbitrarily close to zero. To our best knowledge, this is the first work that shows such a result for the well-known All-Reduce setting.
\end{remark}

\begin{remark} 
As $\tau\downarrow 0$, all Vandermonde nodes satisfy $z_j(\tau)\longrightarrow 1$, so that $\bfV_k(\tau) \bfV_k(\tau)^T \to c \mathds{1}\mathds{1}^T$. Thus, the condition number (ratio of maximum to minimum singular values) $\kappa_2(\bfV_k(\tau))
    =\kappa_2(\bfU_k(\tau)) \longrightarrow\infty$, i.e.,  this construction is numerically unstable for small $\tau$. Thus, in practice, one needs to adapt the approach appropriately.

\end{remark}

\subsection{Trade-off between LS Error and conditioning}
We now examine the role of the multiplicative perturbation $\bfD$ on the conditioning of the $\bfM_i$ matrices. Let $\bfD_\epsilon = I_c + \epsilon \bfH$ with $\epsilon \norm{\bfH}_2 < 1$ for a fixed matrix $\bfH$. This implies that $\norm{\bfD_\epsilon^{-1}}_2 = \frac{1}{\sigma_{\min}(\bfD_\epsilon)} \leq \frac{1}{1-\epsilon\norm{\bfH}_2}$. Moreover, 
$\bfD_\epsilon^{-1} - I_c = - \epsilon\bfD_\epsilon^{-1} \bfH$, so that $\norm{\bfD_\epsilon^{-1} - I_c}_F \leq \epsilon\norm{\bfD_\epsilon^{-1}}_2 \norm{\bfH}_F$.

Applying the result of Proposition \ref{prop:explicitresidual} (nonsingularity of the corresponding $A_\ell(\epsilon)$ follows from the forthcoming assumptions and discussion), we can conclude 
\begin{equation}
\Phi(\bfU,\bfR) \leq \frac{N(N-1)}{2} \frac{\epsilon^2\norm{\bfH}_F^2}{(1 - \epsilon\norm{\bfH}_2)^2} = O(\epsilon^2).    \label{eq:error_scaling}
\end{equation}

Consider the $c \times c$ consecutive-row window $A_\ell(\epsilon)$ where $\ell \in \{0, \dots, N-1\}$ that is formed by using $\bfB$ and $\bfD_\epsilon$ ({\it cf.} \eqref{eq:A-ell}). These are the kind of window types that arise in our construction of the decoding matrices $\bfR_i$'s, except that they are sometimes right-multiplied by powers of $\bfD_\epsilon$. However, $\kappa_2(\bfD_\epsilon) = 1 + O(\epsilon)$ (condition number - ratio of maximum to minimum singular value) so it does not affect the scaling of $\kappa_2(A_\ell(\epsilon)).$

Let $c = \beta N + s$, where $0 \leq s < N$ and define $r_\star = \left\lceil \frac{c}{N} \right\rceil - 1$. Let $E_\ell = \{\ell, \ell+1, \dots, \ell+s-1\} \pmod N$. For $j \in \{\ell, \ell+1, \dots, \ell+c-1\}$, consider its remainder upon division by $N$. If $s> 0$, then a given remainder $\alpha \in \{0, \dots, N-1\}$ occurs $n_\alpha(\ell)$ times where
\begin{align*}
    n_\alpha(\ell) &= \begin{cases}
        \beta + 1, & \text{~if~}\alpha \in E_\ell\\
        \beta, & \text{~if~} \alpha \notin E_\ell.
    \end{cases}
\end{align*}
If $s=0$, then every remainder occurs exactly $\beta$ times. For $\ell\in\{0,\ldots,N-1\}$, the smallest index
$j\geq\ell$ satisfying $j\equiv\alpha\pmod N$ is
$j=\alpha+Nm_\alpha(\ell)$, where
\[
    m_\alpha(\ell)=
    \begin{cases}
        0, & \alpha\geq\ell,\\
        1, & \alpha<\ell.
    \end{cases}
\]
Thus, the first row associated with remainder $\alpha$
is $b_\alpha\bfD_\epsilon^{m_\alpha(\ell)}$, followed by $b_\alpha \bfD_{\epsilon}^{m_\alpha(\ell)+1}, \dots, b_\alpha \bfD_{\epsilon}^{m_\alpha(\ell) + n_\alpha(\ell) -1}$.


Let $F_n$ denote the $n \times n$ lower-triangular finite-difference matrix (Appendix, \S{\ref{sec:finite_diff_fact}}), and $\Pi_\ell$ be a permutation matrix that groups the rows of $A_\ell$ by remainder class, and define $\tilde{F}_\ell = \text{diag}(F_{n_0(\ell)}, F_{n_1(\ell)}, \dots, F_{n_{N-1}(\ell)})$. Then, using Lemma \ref{lem:difference} (Appendix, \S{\ref{sec:finite_diff_fact}}) we have
\begin{align}
    \tilde{F}_\ell \Pi_\ell A_\ell(\epsilon) &= \Lambda_\ell(\epsilon) K_\ell(\epsilon) \label{eq:diagonalize_arg}
\end{align}
where $\Lambda_\ell(\epsilon)$ is a diagonal matrix with entries $\{\epsilon^j: j=0, \dots, n_r(\ell) - 1, r= 0, \dots, N-1\}$ and the corresponding rows of $K_\ell(\epsilon)$ are $b_r \bfD_\epsilon^{m_r(\ell)} \bfH^j$. We note that as $\epsilon \downarrow 0$, we have $\bfD_\epsilon^{m_r(\ell)} \to I_c$. This implies that up to a row-permutation, $K_\ell(\epsilon)$ approaches a matrix denoted $K^{(\ell)}(\bfB, \bfH)$ where

\begin{equation*}
K^{(\ell)}(\bfB, \bfH)=\begin{bmatrix}
\bfB\\
\bfB \bfH\\
\vdots\\
\bfB \bfH^{\beta-1}
\end{bmatrix} \text{~if~$s=0$,} \quad \text{and} \quad 
K^{(\ell)}(\bfB, \bfH) = 
\begin{bmatrix}
\bfB\\
\bfB \bfH\\
\vdots\\
\bfB \bfH^{\beta-1}\\
(\bfB \bfH^\beta)_{E_\ell}
\end{bmatrix} \text{~if~ $s > 0$.}
\end{equation*}
Here \((\bfB \bfH^\beta)_{E_\ell}\) denotes the \(s\) rows indexed by the set \(E_\ell\). An example showing the calculation in \eqref{eq:diagonalize_arg} appears in Appendix \S{\ref{sec:finite_diff_fact}}.

It is not too hard to see that the maximum singular value of $K^{(\ell)}(\bfB,\bfH)$ is upper bounded by $C_1 \sigma_{\max}(\bfB)$ where $C_1$ is a constant that depends on $r_\star$ and $\sigma_{\max}(\bfH)$ (see Appendix, \S{\ref{sec:sing_val_K}}).

{\bf Assumption:} We assume that $K^{(\ell)}(\bfB, \bfH)$ is nonsingular and further that $\min_{\ell =0, \dots, N-1} \sigma_{\min}(K^{(\ell)}(\bfB, \bfH)) = \gamma > 0$ . This can be justified through a Schwartz-Zippel type argument, and through multiple trials for selecting $\bfH$. Constructions of $\bfH$ that depend on $\bfB$ can also be considered.




In the following theorem, $N,c,\bfB$, and $\bfH$ are fixed, and all asymptotic notation refers to $\epsilon \to 0$.
The proof appears in the Appendix, \S{\ref{sec:proof_singular_hierarchy}}.
\begin{theorem}[Singular value scaling]
\label{thm:singularhierarchy}
The singular values of $A_\ell(\epsilon)$ have the same powers of
$\epsilon$, up to constant factors, as the diagonal entries of
$\Lambda_\ell(\epsilon)$. Specifically, (i) If $s=0$, then for each $j=0,\ldots,\beta-1$, exactly $N$
singular values are $\Theta(\epsilon^j)$. (ii) If $s>0$, then for each $j=0,\ldots,\beta-1$, exactly $N$
singular values are $\Theta(\epsilon^j)$ and the remaining $s$ singular values are $\Theta(\epsilon^\beta).$

\end{theorem}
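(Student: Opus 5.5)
Write $G_\ell := \tilde F_\ell\Pi_\ell$. The plan is to start from the factorization \eqref{eq:diagonalize_arg}, which gives
\[
A_\ell(\epsilon) = G_\ell^{-1}\Lambda_\ell(\epsilon)K_\ell(\epsilon).
\]
Here $\Pi_\ell$ is a permutation and $\tilde F_\ell$ is block lower-triangular with unit diagonal, taken from Lemma~\ref{lem:difference}. Both $G_\ell$ and $G_\ell^{-1}$ are fixed matrices, independent of $\epsilon$, with bounded norms. Multiplying by a fixed nonsingular matrix changes every singular value by at most a factor $\norm{G_\ell}_2\norm{G_\ell^{-1}}_2$; this follows from $\sigma_k(XY)\le \norm{X}_2\sigma_k(Y)$, applied in both directions. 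So it suffices to show that the singular values of $\Lambda_\ell(\epsilon)K_\ell(\epsilon)$ scale like the diagonal entries of $\Lambda_\ell(\epsilon)$.

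Next I control $K_\ell(\epsilon)$. Its rows are $b_r\bfD_\epsilon^{m_r(\ell)}\bfH^j$, and these converge entrywise to the rows of $K^{(\ell)}(\bfB,\bfH)$ up to a fixed row permutation. Hence $\norm{K_\ell(\epsilon)}_2 \le 2C_1\sigma_{\max}(\bfB)$ for small $\epsilon$, using the bound from Appendix \S\ref{sec:sing_val_K}. By the Assumption and Weyl's perturbation inequality, $\sigma_{\min}(K_\ell(\epsilon))\ge \gamma/2$ once $\norm{K_\ell(\epsilon)-P K^{(\ell)}}_2\le\gamma/2$, which holds for $\epsilon$ small since that difference is $O(\epsilon)$. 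So $K_\ell(\epsilon)$ is uniformly well conditioned, with constants $\gamma/2$ and $2C_1\sigma_{\max}(\bfB)$.

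The key step is to show that for a diagonal $\Lambda$ and a well-conditioned $K$, the singular values of $\Lambda K$ match those of $\Lambda$ up to constants. This is the multiplicative version of Weyl's inequality (Ostrowski-type): $\sigma_k(\Lambda K)\le \sigma_k(\Lambda)\norm{K}_2$ and $\sigma_k(\Lambda)=\sigma_k(\Lambda K K^{-1})\le\sigma_k(\Lambda K)\norm{K^{-1}}_2$. Both follow from the Courant–Fischer min-max characterization applied to $\sigma_k(XY)\le\sigma_k(X)\norm{Y}_2$. Combining these with the bounds for $G_\ell$ gives
\[
c_1\,\sigma_k(\Lambda_\ell(\epsilon))\le\sigma_k(A_\ell(\epsilon))\le c_2\,\sigma_k(\Lambda_\ell(\epsilon))
\]
for all $k$, with $c_1,c_2$ independent of $\epsilon$.

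What remains is counting the diagonal entries of $\Lambda_\ell(\epsilon)$. Each remainder class $r$ contributes $\epsilon^0,\dots,\epsilon^{n_r(\ell)-1}$. If $s=0$, every $n_r=\beta$, so each power $\epsilon^j$ with $j<\beta$ appears exactly $N$ times. If $s>0$, the $N-s$ classes outside $E_\ell$ contribute powers up to $\epsilon^{\beta-1}$, and the $s$ classes in $E_\ell$ contribute one extra entry $\epsilon^\beta$ each. This gives cases (i) and (ii) after sorting.

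I expect the main obstacle to be the uniform-in-$\epsilon$ lower bound on $\sigma_{\min}(K_\ell(\epsilon))$. The perturbation $K_\ell(\epsilon)-PK^{(\ell)}$ must be shown to be $O(\epsilon)$ in norm. This follows from $\norm{\bfD_\epsilon^{m}-I}_2\le(1+\epsilon\norm{\bfH}_2)^m-1$ with $m\le1$, together with the fact that $\bfH^j$ is bounded for $j\le r_\star$.
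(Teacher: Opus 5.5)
Your proposal is correct and follows essentially the same route as the paper. Both arguments write $A_\ell(\epsilon)=\Pi_\ell^T\tilde F_\ell^{-1}\Lambda_\ell(\epsilon)K_\ell(\epsilon)$ and get $\sigma_{\min}(K_\ell(\epsilon))\ge\gamma/2$ from $K_\ell(\epsilon)\to P_\ell K^{(\ell)}(\bfB,\bfH)$ together with the Assumption. Both then apply the multiplicative inequalities $\sigma_{\min}(K)\sigma_k(\Lambda)\le\sigma_k(\Lambda K)\le\norm{K}_2\sigma_k(\Lambda)$ and the analogous bounds for the fixed factor $\tilde F_\ell^{-1}$, and count the powers of $\epsilon$ on the diagonal of $\Lambda_\ell(\epsilon)$. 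The only difference is cosmetic: you make the perturbation $K_\ell(\epsilon)-P_\ell K^{(\ell)}=O(\epsilon)$ explicit and use Weyl's inequality, where the paper simply appeals to continuity of singular values.
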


\begin{cor}[Exact smallest-singular-value exponent]
\label{cor:smin}
For $r_\star=\left\lceil\frac{c}{N}\right\rceil-1$ and every
\(\ell=0,\ldots,N-1\),
\begin{equation*}
\sigma_{\min}(A_\ell(\epsilon))
=
\Theta(\epsilon^{r_\star}),
\qquad
c_0 \leq \sigma_{\max}(A_\ell(\epsilon))
\leq \tilde{C}_1 \norm{B}_2,
\end{equation*}
where $c_0$ is a constant independent of $\epsilon$. Thus, $\sigma_{\max}(A_\ell(\epsilon)) = \Theta(1)$.
Thus, $\kappa_2(A_\ell(\epsilon)) = \Theta(\epsilon^{-r_\star})$. Furthermore, since the $\bfU_i$'s are formed as inverses of matrices of the form $A_\ell(\epsilon)$, we also have that  $\kappa_2(\bfU_i) = \Theta(\epsilon^{-r_\star})$. From the discussion in Appendix, \S{\ref{sec:reln_U_M_cond_no}} we can conclude that $\kappa_2(\bfM_i) =  O(\sqrt{N} \epsilon^{-r_\star})$ for all $i = 0, \dots, N-1$.
\end{cor}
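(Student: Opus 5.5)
The corollary follows mostly from Theorem~\ref{thm:singularhierarchy}, plus a few norm inequalities. I would prove it in four steps.

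\textbf{Step 1: the smallest singular value.} Theorem~\ref{thm:singularhierarchy} says the smallest power of $\epsilon$ among the singular values of $A_\ell(\epsilon)$ is the largest exponent appearing in $\Lambda_\ell(\epsilon)$. If $s=0$, then $c=\beta N$ and this exponent is $\beta-1=\lceil c/N\rceil-1$. If $s>0$, it is $\beta=\lceil c/N\rceil-1$. In both cases it equals $r_\star$, so $\sigma_{\min}(A_\ell(\epsilon))=\Theta(\epsilon^{r_\star})$.

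\textbf{Step 2: the largest singular value.} For the upper bound I would not go through the factorization \eqref{eq:diagonalize_arg}, since inverting $\tilde F_\ell$ adds constants. Instead I bound $A_\ell(\epsilon)$ directly. Each row has the form $b_\alpha \bfD_\epsilon^{m}$ with $0\le m\le r_\star+1$. Hence $A_\ell(\epsilon)$ is a stack of at most $r_\star+1$ blocks, each a row-submatrix of $\bfB\bfD_\epsilon^{m}$. This gives
\[
\norm{A_\ell(\epsilon)}_2\le \sqrt{r_\star+1}\,(1+\epsilon\norm{\bfH}_2)^{r_\star+1}\norm{\bfB}_2\le \tilde C_1\norm{\bfB}_2
\]
for $\epsilon\le\epsilon_0$.

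For the lower bound I use Step 1 of the theorem again: exactly $N\ge1$ singular values are $\Theta(\epsilon^0)$. Equivalently, $\sigma_{\max}\ge\norm{a_\ell}_2\ge\min_r\norm{b_r}_2>0$. Here each $b_r\neq0$, since $K^{(\ell)}$ is nonsingular and contains $\bfB$ as a block. This gives the constant $c_0$. Dividing the two bounds yields $\kappa_2(A_\ell(\epsilon))=\Theta(\epsilon^{-r_\star})$.

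\textbf{Step 3: the encoders.} Each $\bfV_k=A_{N-1-k}$ with $N-1-k\in\{0,\dots,N-1\}$, and $\bfU_k=\bfV_k^{-1}$. Since $\kappa_2(X^{-1})=\kappa_2(X)$, we get $\kappa_2(\bfU_k)=\Theta(\epsilon^{-r_\star})$.

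\textbf{Step 4: the matrices $\bfM_i$.} I invoke the appendix discussion relating $\kappa_2(\bfM_i)$ to the $\bfU$'s. The upper estimate comes from the block-row structure: $\bfM_i$ has $N$ block rows $[\mathbf 0\ \bfU_{i+r}\ \mathbf 0]$. Therefore
\[
\sigma_{\max}(\bfM_i)\le\Big(\sum_r\norm{\bfU_{i+r}}_2^2\Big)^{1/2}\le\sqrt N\max_k\norm{\bfU_k}_2 .
\]
For the smallest nonzero singular value, any single block row already has $\sigma_{\min}$ at least $\min_k\sigma_{\min}(\bfU_k)$. Reading $\kappa_2$ of the wide matrix $\bfM_i$ through its $c$ columns of rank... the natural quantity is
\[
\sqrt N\,\max_k\norm{\bfU_k}_2\,\big/\,\min_k\sigma_{\min}(\bfU_k).
\]
This ratio is $O(\sqrt N\epsilon^{-r_\star})$, because all the $\bfU_k$ share the same scalings: $\norm{\bfU_k}_2=\Theta(\epsilon^{-r_\star})$ and $\sigma_{\min}(\bfU_k)=\Theta(1)$.

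\textbf{Main obstacle.} The delicate point is Step 4: pinning down which notion of condition number for the rectangular $\bfM_i$ the appendix uses. I would first try the bound $\sigma_{\min}(\bfM_i^T)\ge\min_k\sigma_{\min}(\bfU_k)$, which gives a rank-$Nc$-type lower bound on $\bfM_i$ restricted to a block. I would then fall back on whatever normalization the appendix adopts.
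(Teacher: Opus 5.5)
\textbf{The gap is in Step 4: you never establish the lower bound on $\sigma_{\min}(\bfM_i)$.} Without that lower bound the claimed $\kappa_2(\bfM_i)=O(\sqrt N\epsilon^{-r_\star})$ does not follow.

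First, $\bfM_i$ is not wide. It is $Nc\times(L+1)$ with $Nc-(L+1)=(N-1)(c-1)>0$, so it is tall. Hence $\kappa_2(\bfM_i)$ is the ordinary ratio $\sigma_{\max}/\sigma_{\min}$ with $\sigma_{\min}(\bfM_i)=\min_{\norm{x}_2=1}\norm{\bfM_i x}_2$ over $x\in\mathbb{R}^{L+1}$. There is no normalization ambiguity to fall back on. There is also no ``rank-$Nc$'' bound, since $\operatorname{rank}\bfM_i\le L+1<Nc$.

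Second, knowing that a single block row $[\mathbf 0\ \bfU_{i+r}\ \mathbf 0]$ is well conditioned on its own $c$ columns gives no lower bound on $\norm{\bfM_i x}_2$. That block row annihilates every $x$ supported outside its column window.

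The missing idea, which is the paper's appendix argument, is a covering property. The column windows $W_r=\{N-1-r,\dots,N+c-2-r\}$, $r=0,\dots,N-1$, together cover every index $0,\dots,L$. Therefore
\[
\norm{\bfM_i x}_2^2=\sum_{r=0}^{N-1}\norm{\bfU_{i+r}\,x_{W_r}}_2^2
\;\ge\;\min_k\sigma_{\min}^2(\bfU_k)\sum_{r=0}^{N-1}\norm{x_{W_r}}_2^2
\;\ge\;\min_k\sigma_{\min}^2(\bfU_k)\,\norm{x}_2^2 .
\]
Combine this with your correct upper bound $\sigma_{\max}(\bfM_i)\le\sqrt N\max_k\norm{\bfU_k}_2$. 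Add $\sigma_{\min}(\bfU_k)=1/\sigma_{\max}(A_{N-1-k})=\Theta(1)$ and $\norm{\bfU_k}_2=\Theta(\epsilon^{-r_\star})$. Together these give $\kappa_2(\bfM_i)=O(\sqrt N\epsilon^{-r_\star})$.

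Your Steps 1 and 3 match the paper. Step 2 is a valid, more explicit alternative to the paper's route. The paper bounds $\sigma_{\max}(A_\ell(\epsilon))$ through the factorization $A_\ell=\Pi_\ell^T\tilde F_\ell^{-1}\Lambda_\ell K_\ell(\epsilon)$ and the limit $K^{(\ell)}(\bfB,\bfH)$. You instead bound the rows $b_\alpha\bfD_\epsilon^m$ directly and use $\sigma_{\max}(A_\ell(\epsilon))\ge\norm{a_\ell}_2=\norm{b_\ell}_2$, which yields concrete constants.

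There is one small slip in Step 2. The exponents $m$ appearing in a window range over $\{0,\dots,r_\star+1\}$, so there can be $r_\star+2$ blocks rather than $r_\star+1$. For example, $N=4$, $c=6$, $\ell=3$ gives $m\in\{0,1,2\}$ while $r_\star=1$. This only changes the constant $\tilde C_1$.
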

\begin{remark}
    Corollary \ref{cor:smin} does not apply to the construction in Section \ref{sec:vandermonde} since the Vandermonde construction does not work with a fixed $\bfB$ and $\bfH$. In particular, the corresponding definitions of $\bfB$ and $\bfH$ change with the parameter $\tau$ that goes to zero.
\end{remark}

\subsection{Constrained Optimization Problem Formulation}
\label{sec:constrained_opt}
The construction in Section \ref{sec:mult_diag_cons} guarantees LS squared error $O(\epsilon^2)$, and $O(\sqrt{N} \epsilon^{-r_\star})$ conditioning of the $\bfM_i$ matrices. 
Using this solution as an initialization for the following constrained optimization, it is possible to trade off these competing objectives. Consider
%
\begin{align}
\label{eq:F}
\text{minimize~} & \Phi(\bfU,\bfR) \nonumber \\
\text{subject to~} &\kappa_2(\bfM_i(\bfU)) \leq C, \text{~for~} i=0, \dots, N-1.
\end{align}
Although \eqref{eq:F} is nonconvex, we use an alternating least-squares heuristic in which the $\bfU_i$'s are projected onto a stricter spectral constraint set that is sufficient to guarantee the condition-number constraint.
%
In particular, in one iteration, we minimize $\Phi(\bfU,\bfR)$ with respect to $\bfU$ while keeping $\bfR$ fixed, and vice versa. 
After this, we project the computed singular values of the $\bfU_i$'s into a common interval $a_0 \leq \sigma_{\min}(\bfU_i) \leq \sigma_{\max}(\bfU_i) \leq b_0$.
Note that this implies that $\kappa_2(\bfM_i) \leq \sqrt{N}b_0/a_0$. Therefore, the constant $C$ in \eqref{eq:F} needs to be chosen so that $\sqrt{N}b_0/a_0 \leq C$.

\begin{table}[t]
    \centering
    \caption{\small Raw reduction time in milliseconds; lower is
    better. M $:=10^6$, B $:=10^9$.}
    \label{tab:reduction_time}
    \small
    \begin{tabular}{llrrr}
        \toprule
        System / interconnect & Tensor size & NCCL & Our algorithm & Gain \\
        \midrule
        A100 PCIe, 8 GPUs / PCIe & 100M
        & 1,775.9 & 940.1 & $47.1\%$ \\
        A100 PCIe, 8 GPUs / PCIe & 500M
        & 8,611.2 & 4,651.0 & $46.0\%$ \\
        A100 PCIe, 8 GPUs / PCIe & 1B
        & 17,954.4 & 9,558.6 & $46.8\%$ \\
        \midrule
        H200 SXM, 4 GPUs / NVLink & 9B
        & 147.250 & 197.097 & $-33.9\%$ \\
        H200 SXM, 4 GPUs / NVLink & 15B
        & 244.974 & 327.733 & $-33.8\%$ \\
        H200 SXM, 4 GPUs / NVLink & 24B
        & 392.049 & 522.366 & $-33.2\%$ \\
        \bottomrule
    \end{tabular}
\vspace{-5mm}
\end{table}

\section{Numerical Experiments}
\label{sec:expts_num}

Our numerical experiments compared the performance of the CERAR protocol to the production-level NCCL All-Reduce by \cite{nvidia_nccl}. We considered GPU clusters under two different settings -  a low-bandwidth  (PCIe, 31.5 GB/s) and a high-bandwidth (NVLink, $\geq 300$ GB/s) interconnect. 
All code for recreating these experiments can be found in an anonymous repository. \footnote{\url{https://anonymous.4open.science/r/CERAR--43C6/}} The CERAR scheme was fit using the methods in Section \ref{sec:constrained_opt} (details in Appendix, \S{\ref{sec:appendix_num_exp}}).


{\bf All-Reduce Phase time:} 
We ran the evaluations on single-node systems with eight A100 GPUs
and four H200 GPUs; each GPU started with a FP32 local tensor. CERAR used $c=20,L=26$ for the A100 experiments and $c=12,L=14$
for the H200 experiments. We measured the elapsed time for each GPU to complete
the reduction, including communication and local computation.
For our algorithm, this includes (i) encoding (preparing $\delta_{i,j}$'s), (ii) all communication rounds and their local updates, and (iii)
decoding using the stored values and the $\mathbf{R}_i$ matrices to reconstruct
the approximate sum.
As shown in Table \ref{tab:reduction_time}, CERAR outperforms the NCCL implementation on the
A100 PCIe system, reducing the reduction time
by 46.0--47.1\%. However, on the H200 NVLink system, CERAR is 33.2\%--33.9\% slower than NCCL. Appendix \S{\ref{sec:appendix_num_exp}} has more details and more fine-grained information about the time taken for the different parts of the All-Reduce phase. 

We emphasize that NCCL is a highly optimized (open-source) NVIDIA implementation. In particular, it establishes multiple communication channels and incorporates overlapping computation and communication; these features greatly improve its performance. Our implementation of CERAR is meant more as a proof of concept and the implementation has not been optimized anywhere close to the level of NCCL. In addition, we emphasize that the gap between the methods will typically manifest in the regime when the parameter length is very high. Owing to access limitations, we were unable to benchmark reduction tensors $>$ 24 billion FP32 entries. 
However, we believe that an optimized implementation can yield tangible benefits even in the high-bandwidth NVLink systems.

{\bf Training time:} We trained a Pythia-1.4B model \cite{pmlr-v202-biderman23a}, containing
$\approx 1.41$ billion parameters, on the TinyStories dataset \cite{eldan2023tinystoriessmalllanguagemodels}. Both A100 systems used four GPUs with CERAR using $c=12, L = 14$. For the PCIe experiment, we trained for 1800 iterations ($\approx 117.96 \times 10^6$ tokens), and for the SXM experiment, we trained for 7630 iterations ($\approx 1\times 10^9$ tokens); more details in Appendix \S{\ref{sec:appendix_num_exp}}.

As shown in Table~\ref{tab:training_time}, on the PCIe configuration,
NCCL requires 3.79 hours, whereas CERAR
requires 3.14 hours; the corresponding validation losses at the end of the 1800 iterations are 1.4218 and 1.4213 respectively. Thus, our algorithm reduces the overall training
time by 17.3\%, saving approximately 39 minutes.
On the SXM configuration, NCCL and our algorithm require 4.578 and
4.745 hours, respectively. The validation losses of both schemes are approximately 1.16449. However, our algorithm takes 3.65\% longer.

{\bf Gradient Error:} In the SXM experiment, we measured the relative
$\ell_2$ error of the reduced gradient at every training step.
Table~\ref{tab:gradient_error} reports the
maximum error across workers at the first, middle, and final
steps. For our algorithm, these errors are approximately
1.61\%, 2.87\%, and 2.72\%, respectively, while those for NCCL
are below $7\times10^{-9}$ (consistent with floating-point roundoff). Nevertheless, these errors do not impact the training in an appreciable manner.


\begin{table}[t]
    \centering
    \caption{{\small Training results in $W_0$ for Pythia-1.4B on two four-GPU
    configurations.}}
    \label{tab:training_time}
    \small
    \begin{tabular}{lrrrr}
        \toprule
        Method &
        Training time &
        Throughput &
        Final val.\ loss &
        Peak memory \\
        &
        (hours) &
        (tokens/s) &
        &
        (GB) \\
        \midrule
        \multicolumn{5}{c}{$4\times$ A100 PCIe 40GB
        (PCIe interconnect): 1,800 steps} \\
        \midrule
        NCCL All-Reduce
        & 3.79 & 8,644 & 1.4218 & 36.2 \\
        Our algorithm
        & 3.14 & 10,451 & 1.4213 & 38.6 \\
        \midrule
        \multicolumn{5}{c}{$4\times$ A100 SXM 80GB
        (NVLink interconnect): 7,630 steps} \\
        \midrule
        NCCL All-Reduce
        & 4.578 & 60,682 & 1.1644904 & 49.55 \\
        Our algorithm
        & 4.745 & 58,543 & 1.1644902 & 51.91 \\
        \bottomrule
    \end{tabular}
\vspace{-2mm}    
\end{table}

\begin{table}[t]
    \centering
    \caption{ {\small Relative $\ell_2$ gradient error on four A100 SXM
    80GB GPUs, measured against a separate NCCL reference
    reduction and maximized across workers.}}
    \label{tab:gradient_error}
    \small
    \begin{tabular}{lccc}
        \toprule
        Method & Step 1 & Step 3,815 & Step 7,630 \\
        \midrule
        NCCL All-Reduce
        & $6.772\times10^{-9}$
        & $5.532\times10^{-9}$
        & $6.437\times10^{-9}$ \\
        Our algorithm
        & $0.016078$
        & $0.028653$
        & $0.027179$ \\
        \bottomrule
    \end{tabular}
\vspace{-5mm}    
\end{table}
\section{Conclusions and Future Work}

This work demonstrates that Ring All-Reduce can be made much more communication-efficient when approximate recovery of the sum is sufficient and presents theoretical results guaranteeing asymptotic optimality amongst the class of linear protocols, and practical constructions that guarantee performance metrics (error and numerical stability). Future work will involve optimized implementations of our protocol that allow for practical benefits even in high-bandwidth GPU clusters.

\bibliographystyle{IEEEtran}

\newpage
\section{appendix}

\subsection{Proof that any linear protocol needs to have a normalized rate of at least 1}
\label{sec:min_rate_one}
We say that a protocol is linear if it operates by linear encoding and decoding at the workers.
Here, we prove that any linear protocol that satisfies the requirement in \eqref{eq:approx_grad} has to have rate at least 1. We proceed by contradiction; assume that there exists a linear protocol satisfying \eqref{eq:approx_grad} for all admissible gradients having $L\tilde{c} < d$. In what follows, we will construct two different gradient vectors such that the input to $W_i$ is the same, but the actual values of the sum of the gradients are very different.

We note that at the end of the rounds worker $W_i$ reconstructs an approximation to $\nabla L$ from the incoming vectors obtained from $W_{i-1}$. Suppose that we stack all the vectors received by $W_i$ in a long vector of length $L \tilde{c}$, denoted $\beta_i$ and suppose that $L\tilde{c} < d$. Now consider the linear map from the gradients outside $W_i$ (namely $g_{\calD_j}, j \neq i$) to $\beta_i$, and denote it $\Gamma_i$. This map $\Gamma_i$ is from $\mathbb{R}^{(N-1)d} \to \mathbb{R}^{L \tilde{c}}$. 

Now, we restrict our attention to gradients where every other worker has the same gradient $v \in \mathbb{R}^d$ and $g_{\calD_i}=0$. This provides a restricted linear map from $\mathbb{R}^d$ to $\mathbb{R}^{L \tilde{c}}$ denoted $\tilde{\Gamma}_i$.

Since $L \tilde{c} < d$, there exists a nonzero vector $h \in \mathbb{R}^{d}$ such that $\tilde{\Gamma}_i (h) = 0$. Let $t > 0$. Now, consider two different sets of gradient vectors: (i) $g_{\calD_j} = t h$ for $j\neq i$, and (ii) $g_{\calD_j} = -t h$ for $j\neq i$. Since $\tilde{\Gamma}_i(h) = 0$, worker $W_i$ sees the same set of input vectors in both cases. Set 
\begin{align*}
t & = \frac{C_G}{\sqrt{N-1} \norm{h}_2}.     
\end{align*}
For both sets of inputs $\norm{\mathcal{G}^{(0)}}_2  \leq \norm{\mathcal{G}^{(0)}}_F = \sqrt{N-1} t \norm{h}_2 = C_G$. This means that both sets of gradient vectors satisfy the bounded spectral norm assumption in \eqref{eq:bounded_spectral_assumption}. However, their corresponding sums differ in norm by $2 C_G \sqrt{N-1}$. Thus, $W_i$ cannot recover the gradient within $\ell_2$-error norm $\tilde{\delta}$ (see \eqref{eq:approx_grad}) that is small enough for both sets of gradient vectors. This gives the required contradiction.

\subsection{Proof of Lemma \ref{lemma:nonexistence_exact}}
\label{sec:nonexistence_proof}
Assume, for contradiction, that $\Phi(\bfU,\bfR)=0$. Then, we have
\[
    \bfM_i(\bfU)\bfR_i=\bfT,
\]
for all $W_i$. We note that the $r$-th block of $\bfM_i(\bfU) \bfR_i$ is given by $\bfU_{i+r} \mathcal{W}_{N-1-r}(\bfR_i)$ for $r =0, \dots, N-1$ (worker indices reduced modulo-$N$). This can also be expressed as $\bfU_{i-1-s} \mathcal{W}_{s}(\bfR_i)$ by setting $s=N-1-r$. This requires
\begin{equation}
    \bfU_{i-1-s} \mathcal{W}_{s}(\bfR_i)=I_c,
    \qquad s=0,\ldots,N-1.
    \label{eq:exactblocks}
\end{equation}
Consequently, all $\bfU_i, i =0,\ldots, N-1$ must be nonsingular.  Defining $\bfV_i:=\bfU_i^{-1}$, eq. \eqref{eq:exactblocks} becomes
\begin{equation}
    \mathcal{W}_{s}(\bfR_i)=\bfV_{i-1-s}.
    \label{eq:exactwindows}
\end{equation}
Consecutive windows of the matrices $\bfR_i$ for $i = 0, \dots, N-1$ overlap in $c-1$ rows.  Taking
consecutive values of $s$ in \eqref{eq:exactwindows} for all workers, therefore gives, for
every $k$,
\begin{equation*}
    \bfV_k[2:c,:]=\bfV_{k-1}[1:c-1,:],
\end{equation*}
where the row indices of $\bfV_k$ are one-based.

Equivalently, row by row,
\[
    \bfV_k[t+1,:]=\bfV_{k-1}[t,:],
    \qquad t=1,\ldots,c-1.
\]

For any $j=1,\ldots,c-N$, applying this identity $N$ times yields
\begin{align*}
    \bfV_k[j+N,:]
    &=\bfV_{k-1}[j+N-1,:] \notag\\
    &=\cdots
      =\bfV_{k-N}[j,:]
      =\bfV_k[j,:],
\end{align*}
where the last equality holds because $k \equiv (k-N) \pmod N$.
Consequently, every row after the first $N$ repeats an earlier
row with the same index modulo $N$. Hence
\[
    \operatorname{rank}(\bfV_k)\leq N<c,
\]
contradicting the nonsingularity of $\bfV_k$.
%

\subsection{Error bound of Vandermonde Construction}
\label{sec:vand_cons_proof}
Let $\tilde{\bfU}_\tau$ and $\tilde{\bfR}_\tau$ be the matrices obtained by applying the Vandermonde construction. Applying the result of Proposition \ref{prop:explicitresidual}, we have 
\begin{align}
    0\leq\Phi(\tilde{\bfU}_\tau,\tilde{\bfR}_\tau)
    &=
    \frac{N(N-1)}{2}
    \sum_{j=1}^{c}
       \left(e^{-N\tau\xi_j}-1\right)^2 \nonumber\\ 
       &=     \frac{N^3(N-1)}{2}\,
    \tau^2\sum_{j=1}^{c}\xi_j^2
    +O(\tau^3) \text{~(when $\tau$ is small).}
    \label{eq:upperbound}
\end{align}
For fixed $\xi_j$, the right-hand side converges to zero quadratically as
$\tau \downarrow 0$.  Hence $\inf_{\bfU,\bfR} \Phi(\bfU,\bfR)=0$. When $c>N$, Lemma \ref{lemma:nonexistence_exact} shows that no finite $(\bfU,\bfR)$ attains this value.

\subsection{Proof of Theorem \ref{thm:singularhierarchy}}
\label{sec:proof_singular_hierarchy}

We first note that since $K_\ell(\epsilon) \to P_\ell K^{(\ell)}(\bfB, \bfH)$ for an appropriate row-permutation $P_\ell$, we have by the continuity of singular values and the assumption that $\min_\ell \sigma_{\min}(K^{(\ell)}(\bfB, \bfH)) = \gamma > 0$, there exist $\epsilon$ small enough so that $\sigma_{\min}(K_\ell(\epsilon)) \geq \gamma/2$ for every $\ell$.

From \eqref{eq:diagonalize_arg},
\[
A_\ell(\epsilon)
=
\Pi_\ell^T \tilde{F}_\ell^{-1}
\Lambda_\ell K_\ell(\epsilon).
\]
The matrix \(\Pi_\ell\) is orthogonal, and \(\tilde{F}_\ell\) is fixed and invertible. For every singular-value index \(i\),
standard product inequalities \cite{horn_matrix_analysis} give
\[
\sigma_{\min}(K_\ell(\epsilon))
\sigma_i(\Lambda_\ell(\epsilon))
\le
\sigma_i(\Lambda_\ell(\epsilon) K_\ell(\epsilon))
\le
\sigma_{\max}(K_\ell(\epsilon))\,
\sigma_i(\Lambda_\ell(\epsilon)).
\]
Similarly, multiplication on the left by the fixed matrix
\(\tilde{F}_\ell^{-1}\) changes every singular value by at most fixed positive
multiplicative constants:
\[
\sigma_{\min}(\tilde{F}_\ell^{-1})\,
\sigma_i(\Lambda_\ell (\epsilon)) \sigma_{\min} (K_\ell(\epsilon))
\le
\sigma_i(\tilde{F}_\ell^{-1}\Lambda_\ell K_\ell(\epsilon))
\le
\sigma_{\max}(\tilde{F}_\ell^{-1})\,
\sigma_i(\Lambda_\ell K_\ell).
\]
Every singular value of \(A_\ell\) is therefore
bounded above and below by positive constants multiplied by the corresponding
singular value of \(\Lambda_\ell\).  The multiplicities follow directly from
the characterization of $\Lambda_\ell$.

\subsection{Finite-difference factorization}
\label{sec:finite_diff_fact}
For \(n\ge1\), define the \(n\times n\) lower-triangular finite-difference
matrix \(F_n\)
\begin{equation*}
(F_n)_{j+1,k+1}
=
\begin{cases}
(-1)^{j-k}\binom{j}{k}, & 0\le k\le j,\\
0, & k>j,
\end{cases}
\end{equation*}
where $j,k \in \{0, \dots, n-1\}$ and the rows and columns of $F_n$ are numbered $1, \dots, n$.
The matrix \(F_n\) is invertible and
\[
\det F_n=1.
\]


\begin{lemma}[Exact finite-difference identity]
\label{lem:difference}
For every row vector \(b\), every integer \(m\ge0\), and every \(j\ge0\),
\begin{equation*}
\sum_{k=0}^j
(-1)^{j-k}\binom{j}{k}
b\bfD_\epsilon^{m+k}
=
\epsilon^j
b\bfD_\epsilon^m H^j.
\end{equation*}
\end{lemma}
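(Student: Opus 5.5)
The plan is to prove the identity by recognizing the left-hand side as a $j$-th forward difference applied to the matrix-valued sequence $k \mapsto \bfD_\epsilon^{m+k}$, and then exploiting the fact that $\bfD_\epsilon = I_c + \epsilon \bfH$ is a polynomial in $\bfH$, so all the matrices involved commute. Factor out the common left factor $b\bfD_\epsilon^m$ (all powers of $\bfD_\epsilon$ commute with each other). It then suffices to show the matrix identity
\[
\sum_{k=0}^{j}(-1)^{j-k}\binom{j}{k}\bfD_\epsilon^{k} = \epsilon^j \bfH^j.
\]
Left-multiplying by $b\bfD_\epsilon^m$ then gives the claim, with the $\bfD_\epsilon^m$ factor on the left of $\bfH^j$ exactly as stated.

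For the matrix identity, I would use the binomial theorem in the commutative subalgebra generated by $\bfH$. Since $\bfD_\epsilon$ and $I_c$ commute, $(\bfD_\epsilon - I_c)^j$ expands as
\[
(\bfD_\epsilon - I_c)^j = \sum_{k=0}^{j}\binom{j}{k}\bfD_\epsilon^{k}(-I_c)^{j-k},
\]
which is precisely the left-hand side. On the other hand, $\bfD_\epsilon - I_c = \epsilon \bfH$, so $(\bfD_\epsilon - I_c)^j = \epsilon^j\bfH^j$. The case $j=0$ is trivial: both sides equal $b\bfD_\epsilon^m$, since $\bfH^0 = I_c$.

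Alternatively, one can argue by induction on $j$. Write $\Delta$ for the forward difference $x_k \mapsto x_{k+1} - x_k$ applied to $x_k = b\bfD_\epsilon^{m+k}$. Then $\Delta x_k = x_k(\bfD_\epsilon - I_c) = \epsilon\, x_k \bfH$, and iterating gives $\Delta^j x_0 = \epsilon^j x_0 \bfH^j$. Pascal's rule identifies $\Delta^j x_0$ with the alternating binomial sum.

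I do not expect any real obstacle. The only point needing care is the commutativity that justifies the binomial expansion, which holds because every factor is a polynomial in $\bfH$. I would also note that the lemma does not need $\epsilon\norm{\bfH}_2<1$, since no inverse of $\bfD_\epsilon$ appears when $m \ge 0$. Finally, I would remark that stacking these identities for $j=0,\dots,n_\alpha(\ell)-1$ is exactly the statement $F_{n}$ times the stacked rows equals $\mathrm{diag}(\epsilon^j)$ times the rows $b\bfD_\epsilon^m\bfH^j$. This is the form used in \eqref{eq:diagonalize_arg}.
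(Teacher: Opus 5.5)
Your proposal is correct and is exactly the paper's proof: factor $b\bfD_\epsilon^m$ out on the left, then use the binomial expansion $\sum_{k=0}^j(-1)^{j-k}\binom{j}{k}\bfD_\epsilon^k=(\bfD_\epsilon-I_c)^j=\epsilon^j\bfH^j$, which is valid because $\bfD_\epsilon$ commutes with $I_c$. Your forward-difference induction and your remark that invertibility of $\bfD_\epsilon$ is not needed are both fine, but the proof does not require them.
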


\begin{proof}
Factor \(b\bfD_\epsilon^m\) on the left and use the binomial identity:
\[
\sum_{k=0}^j
(-1)^{j-k}\binom{j}{k}
\bfD_\epsilon^k
=
(\bfD_\epsilon-I_c)^j
=
\epsilon^jH^j.
\]
\end{proof}
We show an example of how the finite difference matrix acts below.
\begin{example}
\label{eg:finite_diff_n_3}
    For $n=3$, we have
    \[
    F_3 = \begin{bmatrix}
        1 & 0 & 0\\
        -1 & 1 & 0\\
        1 & -2 & 1\\
    \end{bmatrix}.
    \]
    Thus, we have
    \begin{align*}
        F_3 \begin{bmatrix}
            a_0\\
            a_0 \bfD_\epsilon\\
            a_0 \bfD_\epsilon^2            
        \end{bmatrix} &= \begin{bmatrix}
        1 & 0 & 0\\
        -1 & 1 & 0\\
        1 & -2 & 1\\
    \end{bmatrix} \begin{bmatrix}
            a_0\\
            a_0 \bfD_\epsilon\\
            a_0 \bfD_\epsilon^2            
        \end{bmatrix} \\
        &= \begin{bmatrix}
            a_0\\
            \epsilon a_0 \bfH\\
            \epsilon^2 a_0 \bfH^2\\
        \end{bmatrix}\\
        &= \begin{bmatrix}
            1 & 0 & 0\\
            0 & \epsilon & 0\\
            0 & 0 & \epsilon^2
        \end{bmatrix} \begin{bmatrix}
            a_0\\
            a_0 \bfH\\
            a_0 \bfH^2\\
        \end{bmatrix}.
    \end{align*}
    
\end{example}

\subsection{Bounds on the singular values of $K^{(\ell)}(\bfB,\bfH)$}
\label{sec:sing_val_K}
Let $\mathbf{1}_{s > 0}$ denote the indicator function of the positivity of $s$ and $x$ be a vector with $\norm{x}_2 = 1$. We have
\begin{align*}
    \norm{K^{(\ell)}(\bfB, \bfH) x}^2_2 &= \sum_{j=0}^{\beta-1} \norm{\bfB\bfH^j x}_2^2 + \mathbf{1}_{s>0} \norm{(\bfB \bfH^\beta)_{E_\ell} x}_2^2
    \leq \sigma_{\max}^2(\bfB) \left( \sum_{j=0}^{r_\star} \norm{\bfH}^{2j}\right)
    \leq C_1^2 \sigma_{\max}^2(\bfB),
\end{align*}
where the constant $C_1 = \sqrt{r_\star + 1} \max(1, \norm{\bfH}^{r_\star}_2)$. 

The constant $\tilde{C}_1$ in the upper bound in Corollary \ref{cor:smin} is greater than or equal to $C_1$ owing to the multiplicative factor of the inverse of the finite-difference matrix. Likewise, the constant $c_0$ appears because of the finite-difference matrix and the limiting $K^{(\ell)}(\bfB, \bfH)$.


\subsection{Relation between condition number of the $\bfU_i$'s and $\bfM_j$ matrices}
\label{sec:reln_U_M_cond_no}
In what follows $\bfM_j[i,:]$ denotes the $i$-th block-row of $\bfM_j$ (with $c$ rows) for $i = 0, \dots, N-1$. Based on the structure of $\bfM_j$, we note that for $\norm{x}_2 = 1$
\begin{align*}
    \norm{\bfM_j x}_2^2 &= \sum_{i=0}^{N-1} \norm{\bfM_j[i,:] x}_2^2\\
    &\leq N \max_{i=0, \dots, N-1} \norm{\bfU_i}^2_2\\
    &= N \max_{i=0, \dots, N-1} \sigma^2_{\max}(\bfU_i).
\end{align*}
This means that $\sigma_{\max}(\bfM_j) \leq \sqrt{N} \sqrt{\max_{i=0, \dots, N-1} \sigma^2_{\max}(\bfU_i)}$. Moreover, we also have $\sigma_{\max}(\bfM_j) \geq \max_k \sigma_{\max}(\bfU_k)$. This is because every $\bfU_k$ appears as a zero-padded block row within $\bfM_j$. From Corollary \ref{cor:smin}, we have that $\sigma_{\max}(\bfU_i) = \Theta(\epsilon^{-r_\star})$. Therefore, for fixed $\bfB$ and $\bfH$, we have $\sigma_{\max}(\bfM_j) = \Theta(\epsilon^{-r_\star})$.

Furthermore, for  $\norm{x}_2 = 1$, we have
\begin{align*}
    \norm{\bfM_j x}^2_2 &=  \sum_{i=0}^{N-1} \norm{\bfM_j[i,:] x}^2_2\\
    &= \sum_{i=0}^{N-1} \norm{\bfU_{j+i} x[N-1-i: N+c-2-i]}^2_2\\
    &\geq \left(\sum_{i=0}^{N-1} \norm{x[N-1-i: N+c-2-i]}_2^2\right) \times \left(\min_{k=0, \dots, N-1 } \sigma^2_{\min}(\bfU_k)\right)\\
    &\geq \norm{x}_2^2 \times  \left(\min_{k=0, \dots, N-1 } \sigma^2_{\min}(\bfU_k)\right)\\
    &= \min_{k=0, \dots, N-1 } \sigma^2_{\min}(\bfU_k).
\end{align*}
Thus, we have $\sigma_{\min}(\bfM_j) \geq \sqrt{\left(\min_{k=0, \dots, N-1 } \sigma^2_{\min}(\bfU_k)\right)}$

From Corollary \ref{cor:smin}, we have $\sigma_{\min}(\bfU_i) = \Theta(1)$. Thus, we can conclude that $\kappa_2(\bfM_j) = O(\sqrt{N}\epsilon^{-r_\star})$.

\subsection{Details about numerical experiments}
\label{sec:appendix_num_exp}

{\bf Details about fitting $(\bfU, \bfR)$ parameters for CERAR:}
The code for running the optimization to determine $\bfU$ and $\bfR$ is available at this repository.\footnote{\url{https://anonymous.4open.science/r/CERAR--43C6/}} The code picks $\bfB$ to be a $N \times c$ matrix of i.i.d. $N(0,1)$ random variables.
Following this, the matrix $\bfH$ is either chosen as a diagonal matrix whose diagonal entries are close to zero, or chosen in a deterministic manner using the constructed $\bfB$.

{\bf Details about All-Reduce Phase time measurement experiments:} 
In our experiments, each GPU is controlled by one worker process. Before each measured reduction, the worker processes synchronize at a barrier: each waits until all workers
have reached this point. After synchronization, each worker starts its timer with its GPU
idle, executes the reduction, and stops the timer once the
reduction has finished on its GPU. Before collecting these measurements, we perform 10 untimed warmup runs to
initialize communication resources and, for our algorithm,
compile the GPU functions. The measured time therefore includes communication
and local computation, but excludes warmup runs, input
generation, buffer initialization, synchronization before
the timer starts, and result reporting. We then perform 100 runs measuring
only the total reduction time (unprofiled runs) and 100
additional runs with additional timers measuring the communication and
computation phases separately (profiled runs). For each unprofiled run, we take the maximum
elapsed time across all the GPUs; Table~\ref{tab:reduction_time}
reports the median of these maximum times across runs. 



In the profiled runs, we measure the time spent in our algorithm's
communication and computation phases. For each GPU, we measure
the elapsed time for sending and receiving vectors in each round,
including waiting for these operations to complete. Overlapping
sends and receives are timed together, rather than adding their
individual durations. The communication time is the sum of these
measurements across all rounds. Computation time includes encoding,
updating received vectors and stored data, preparing outgoing
vectors, and decoding, together with the memory accesses required
by these operations.  For the profiling results reported in
Table~\ref{tab:reduction_profile}, we use 1B FP32 entries per GPU on the A100 PCIe system and 24B FP32 entries per GPU on the H200 NVLink system. The table reports
the mean times across the profiled runs, separately for each GPU. For NCCL, we report only the total
profiled reduction time. Communication and computation times
may not sum exactly to the total because the total also includes
overhead outside the measured phases.



\begin{table}[t]
    \centering
    \caption{ {\small Profiled reduction times in milliseconds for A100 PCIe
    (PCIe interconnect; $N=8$, $c=20$, $L=26$, 1B FP32 entries per GPU)
    and H200 SXM (NVLink interconnect; $N=4$, $c=12$, $L=14$,
    24B FP32 entries per GPU).
    Each entry is the mean across profiled runs for the indicated
    GPU. Total is the elapsed time of the complete profiled reduction.}}
    \label{tab:reduction_profile}
    \small
    \setlength{\tabcolsep}{5pt}
    \begin{tabular}{lrrrrr}
        \toprule
        & & NCCL & \multicolumn{3}{c}{Our algorithm} \\
        \cmidrule(lr){3-3}
        \cmidrule(lr){4-6}
        System / interconnect & GPU & Total & Communication & Computation & Total \\
        \midrule
        A100 PCIe, 8 GPUs / PCIe & 0 & 17,927.334 & 9,509.018 & 33.613 &
        9,549.329 \\
        & 1 & 17,924.756 & 9,517.594 & 32.108 & 9,553.995 \\
        & 2 & 17,919.882 & 9,520.844 & 31.662 & 9,556.541 \\
        & 3 & 17,923.538 & 9,519.077 & 32.479 & 9,556.751 \\
        & 4 & 17,927.527 & 9,519.617 & 32.251 & 9,556.561 \\
        & 5 & 17,927.654 & 9,357.713 & 29.141 & 9,387.553 \\
        & 6 & 17,927.335 & 9,289.659 & 31.173 & 9,323.855 \\
        & 7 & 17,927.416 & 9,478.487 & 29.150 & 9,508.829 \\
        \midrule
        H200 SXM, 4 GPUs / NVLink & 0 & 392.158 & 306.509 & 215.889 & 522.592 \\
        & 1 & 392.030 & 306.503 & 215.867 & 522.568 \\
        & 2 & 391.943 & 306.459 & 215.835 & 522.496 \\
        & 3 & 391.948 & 306.591 & 215.729 & 522.512 \\
        \bottomrule
    \end{tabular}
\end{table}

Using the same profiled runs and tensor sizes as
Table~\ref{tab:reduction_profile}, Table~\ref{tab:computation_breakdown} further divides our
algorithm's computation into encoding, outgoing-vector
preparation, updates after reception, and decoding. Thus, the A100 PCIe breakdown corresponds to 1B FP32 entries per GPU, whereas the H200 breakdown corresponds to 24B FP32 entries per GPU. For the A100 PCIe configuration, outgoing-vector preparation covers the additions after
reception in rounds 1--7 that prepare the sends for rounds
2--8. Updates after reception cover rounds 8--26: storing
received vectors for decoding and, except in the final
round, removing the returning local contribution and
preparing the next send. For the H200 configuration, the corresponding round ranges are 1--3 for outgoing-vector preparation (preparing sends for rounds 2--4) and 4--14 for updates after reception. The four components sum to the
computation time in Table~\ref{tab:reduction_profile}, up to
rounding.


\begin{table}[t]
    \centering
    \caption{{\small Computation breakdown for our algorithm in milliseconds,
    using the same profiled runs as Table~\ref{tab:reduction_profile}.
    The A100 PCIe results use a tensor of 1B FP32
    entries per GPU, whereas the H200 SXM results
    use a tensor of 24B FP32 entries per GPU.
    Each entry is the mean for the indicated GPU.}}
    \label{tab:computation_breakdown}
    \small
    \setlength{\tabcolsep}{6pt}
    \begin{tabular}{lrrrrr}
        \toprule
        System / interconnect & GPU & Encoding &
        \shortstack{Outgoing-vector\\preparation} &
        \shortstack{Updates after\\reception} &
        Decoding \\
        \midrule
        A100 PCIe, 8 GPUs / PCIe & 0 & 6.245 & 4.248 & 16.197 & 6.923 \\
        & 1 & 6.186 & 4.120 & 14.907 & 6.896 \\
        & 2 & 6.206 & 3.990 & 14.594 & 6.872 \\
        & 3 & 6.255 & 4.394 & 14.955 & 6.876 \\
        & 4 & 6.186 & 4.351 & 14.842 & 6.873 \\
        & 5 & 6.229 & 3.283 & 12.752 & 6.877 \\
        & 6 & 6.305 & 3.849 & 14.087 & 6.932 \\
        & 7 & 6.222 & 3.508 & 12.544 & 6.876 \\
        \midrule
        H200 SXM, 4 GPUs / NVLink & 0 & 51.390 & 16.441 & 93.414 & 54.645 \\
        & 1 & 51.382 & 16.437 & 93.403 & 54.645 \\
        & 2 & 51.366 & 16.438 & 93.405 & 54.626 \\
        & 3 & 51.386 & 16.422 & 93.276 & 54.645 \\
        \bottomrule
    \end{tabular}
\end{table}

The profiling results help explain the different outcomes.
On A100 PCIe at 1B entries per GPU, our algorithm spends
approximately 9.29--9.52 seconds per GPU on communication,
while its computation takes only 29.1--33.6 ms. On H200 at 24B entries per GPU, our communication phase takes
approximately 306.5--306.6 ms, which is less than NCCL's
complete reduction time of approximately 392.0 ms.
However, the additional computation takes approximately
215.7--215.9 ms, bringing our total reduction time to
approximately 522.5--522.6 ms.
These results demonstrate the benefit of our algorithm on
the tested low-bandwidth PCIe system, while showing that
its computation cost outweighs the communication savings
on the tested high-bandwidth H200 system for our current implementation. As discussed in Section \ref{sec:expts_num}, optimizing our implementation and performing experiments on larger tensors could further improve performance.


{\bf Details about training experiments:} Within each configuration, both methods (CERAR and NCCL) use the same model
initialization, data ordering, and training hyperparameters.
We use sequences of length 512, gradient checkpointing, and the
AdamW optimizer. 
The model computation uses a mixture of BF16 and FP32, while model parameters, stored gradients, and gradient reduction use FP32.
The learning rate is linearly increased to $3\times10^{-4}$ and
subsequently decayed using a cosine schedule. 



For the PCIe experiment, the training set contains 120 million tokens,
each GPU processes a batch of 32 sequences, and the learning-rate
warmup lasts 200 steps. We train for 1,800 steps, processing
117.96 million tokens. For the SXM experiment, the training set
contains approximately 474.89 million tokens, the per-GPU batch size
is 64, and the warmup lasts 100 steps. We train for 7,630 steps,
processing approximately 1 billion tokens.  




\end{document}